\def\resetMathstrut@{%
  \setbox\z@\hbox{%
    \mathchardef\@tempa\mathcode`\[\relax
    \def\@tempb##1"##2##3{\the\textfont"##3\char"}%
    \expandafter\@tempb\meaning\@tempa \relax
  }%
  \ht\Mathstrutbox@\ht\z@ \dp\Mathstrutbox@\dp\z@}
\newtheorem{theorem}{Theorem}
\newtheorem{observation}{Observation}
\newtheorem{definition}{Definition}
\newtheorem{lemma}{Lemma}
\newtheorem{mechanism}{Mechanism}
\newcommand{\predGen}{*}
\newcommand{\predX}{\hat{\mathbf{x}}}
\newcommand{\predF}{F^*}
\newcommand{\predE}{\hat{\mathbf{e}}}
\newcommand{\distP}{\mathcal{P}}
\newcommand{\G}{\mathcal{G}}
\newcommand{\MOnlyInside}{\textsc{OnlyM}}
\newenvironment{nscenter}
 {\parskip=0pt\par\nopagebreak\centering}
 {\par\noindent\ignorespacesafterend}
\title{Randomized Strategic Facility Location with Predictions\thanks{Eric Balkanski was supported by NSF grants CCF-2210501 and IIS-2147361. Vasilis Gkatzelis was supported by NSF grant CCF-2210501 and NSF CAREER award CCF-2047907.}}
\author[1]{Eric Balkanski\thanks{\href{mailto:eb3224@columbia.edu}{eb3224@columbia.edu}}}
\author[2]{Vasilis Gkatzelis\thanks{\href{mailto:gkatz@drexel.edu}{gkatz@drexel.edu}}}
\author[3]{Golnoosh Shahkarami\thanks{ \href{mailto:gshahkar@mpi-inf.mpg.de}{gshahkar@mpi-inf.mpg.de}}}
\affil[1]{Columbia University, IEOR}
\affil[2]{Drexel University, Computer Science}
\affil[3]{Max Planck Institut für Informatik, Universität des Saarlandes}
\date{}
\begin{document}

\maketitle

\begin{abstract}
    In the strategic facility location problem, a set of agents report their locations in a metric space and the goal is to use these reports to open a new facility, minimizing an aggregate distance measure from the agents to the facility. However, agents are strategic and may misreport their locations to influence the facility’s placement in their favor. The aim is to design truthful mechanisms, ensuring agents cannot gain by misreporting. This problem was recently revisited through the learning-augmented framework, aiming to move beyond worst-case analysis and design truthful mechanisms that are augmented with (machine-learned) predictions. The focus of this prior work was on mechanisms that are deterministic and augmented with a prediction regarding the optimal facility location. In this paper, we provide a deeper understanding of this problem by exploring the power of randomization as well as the impact of different types of predictions on the performance of truthful learning-augmented mechanisms. We study both the single-dimensional and the Euclidean case and provide upper and lower bounds regarding the achievable approximation of the optimal egalitarian social cost.
\end{abstract}

\newpage
\section{Introduction}
In the classic facility location problem the goal is to determine the ideal location for a new facility, taking as input the preferences of a group of $n$ agents. Due to the wide variety of applications that this problem captures, it has received a lot of attention from different perspectives. A notable example is the strategic version of this problem which is motivated by the fact that the participating agents may be able to strategically misreport their preferences, leading to a facility location choice that they prefer over the one that would be chosen if they were truthful. In the strategic facility location problem, the goal is to design \emph{truthful} mechanisms, i.e., mechanisms that elicit the agents' true preferences by carefully removing any incentive for the agents to lie~\citep{moulin1980strategy,DBLP:journals/teco/ProcacciaT13}. This problem has played a central role in the broader literature on mechanism design without money and a lot of prior work has focused on optimizing the quality of the returned location subject to the truthfulness constraint (see Section~\ref{sec:related} for a brief overview). However, this work has mostly focused on worst-case analysis, often leading to unnecessarily pessimistic impossibility results. 

Aiming to overcome the limitations of worst-case analysis, a surge of work during the last few years has focused on the design of algorithms in the \emph{learning-augmented framework}~~\citep{mitzenmacher2020algorithms}. According to this framework, the designer is provided with some machine-learned prediction regarding the instance at hand and the goal is to leverage this additional information in the design process. However, crucially, this information is not guaranteed to be accurate and can, in fact, be arbitrarily inaccurate. The goal is to achieve stronger performance guarantees whenever the prediction happens to be accurate (known as the \emph{consistency} guarantee), while at the same time maintaining some worst-case guarantee even if the prediction is arbitrarily inaccurate (known as the \emph{robustness} guarantee).

\cite{DBLP:conf/sigecom/AgrawalBGOT22} and \cite{xu2022mechanism} recently introduced the learning-augmented framework to mechanism design problems involving self-interested strategic agents with private information, and both of these works studied the strategic facility location problem. \cite{DBLP:conf/sigecom/AgrawalBGOT22} considered both the egalitarian and the utilitarian social cost objectives and provided tight bounds on the best-feasible robustness and consistency trade-off for truthful mechanisms augmented with a prediction regarding what the optimal facility location would be. However, the achievable trade-offs between robustness and consistency may heavily depend on the specific type of prediction that the mechanism is augmented with: note that the consistency guarantee only binds if the prediction is accurate, so more refined predictions bind on fewer instances (the subset of instances where even the refined prediction is accurate) and could, therefore, enable the design of learning-augmented mechanisms with improved guarantees. This leaves open the question of whether mechanisms equipped with more refined predictions can, indeed, achieve better trade-offs between robustness and consistency. Also, \cite{DBLP:conf/sigecom/AgrawalBGOT22} restricted their attention to deterministic mechanisms, leaving open the possibility for truthful randomized mechanisms to achieve even stronger guarantees. In this work we significantly expand our understanding of this problem by studying both the power of randomization and the power of alternative predictions.

\subsection{Our Results}
We revisit the problem of designing truthful learning-augmented mechanisms for the strategic facility location problem. These mechanisms ask each agent to report their preferred location in some metric space (which is private information) and they are also augmented with some (unreliable) prediction related to this private information. Using the agents' reported locations, along with the prediction, the mechanism chooses a facility location in this metric space, aiming to (approximately) minimize some social cost measure. Our focus in this paper is on the egalitarian social cost, i.e., the maximum distance between an agent's preferred location and the location chosen for the facility. Our goal is to evaluate the performance of mechanisms that can leverage randomization, as well as the power of different types of predictions.

We first consider the well-studied single-dimensional version of the problem, where all agents lie on a line and the mechanism needs to choose a facility location on that line. Prior work on the strategic facility location problem without predictions, showed that for this class of instances no deterministic truthful mechanism can achieve an approximation better than $2$ and no randomized truthful\footnote{A truthful randomized mechanism guarantees truthfulness in expectation. See Section~\ref{sec:prelim} for more details.} mechanism can achieve an approximation better than $1.5$; both of these results are shown to be tight~\citep{DBLP:journals/teco/ProcacciaT13}. In the learning-augmented framework, \cite{DBLP:conf/sigecom/AgrawalBGOT22} showed there exists a truthful deterministic mechanism provided with a prediction regarding the optimal facility location that achieves the best of both worlds: a perfect consistency of $1$ and an optimal robustness of $2$. 

Our main result for the single-dimensional version shows that even if the mechanism is provided with the strongest possible prediction (i.e., a prediction regarding every agent's preferred location), any randomized truthful mechanism that is $(1+\delta)$-consistent for some $\delta\in [0, 0.5]$ can be no better than $(2-\delta)$-robust. This implies that the previously proposed deterministic learning-augmented mechanism that is $1$-consistent and $2$-robust and the randomized non-learning-augmented mechanism that is $1.5$-robust (and hence also $1.5$-consistent) are both Pareto optimal among all randomized mechanisms, even if they are augmented with the strongest possible predictions. Beyond these two extreme points, this result proves a lower bound for the achievable trade-off between robustness and consistency for any randomized mechanism, even with the strongest predictions. We complement this bound by observing that this trade-off can, in fact, be achieved by a truthful randomized mechanism that chooses between the optimal learning-augmented deterministic mechanism and the optimal non-augmented randomized mechanism, thus verifying that our bound fully characterizes this Pareto frontier.

We then consider the two-dimensional case, for which much less is known about randomized mechanisms. The best-known truthful randomized mechanism is the Centroid, introduced by~\cite{DBLP:conf/sigecom/TangYZ20}, which guarantees a $2 - 1/n$ approximation. We provide a truthful randomized mechanism that is equipped with a prediction regarding the identities of the most extreme agents (those who would incur the maximum cost in the optimal solution) and, using this prediction, our mechanism achieves $1.67$ consistency and $2$ robustness. The idea is to use these predictions to select at most three extreme agents, apply the Centroid mechanism to them, and guarantee a $1.67$ approximation if the predictions are accurate not only for these three agents but for all other agents as well, utilizing the properties of the Euler line.

An interesting observation is that the lower bound of $1.5$ from the single-dimensional case does not extend to two dimensions, so prior work does not imply any lower bound for two dimensions! We prove a lower bound of $1.118$ for all truthful randomized mechanisms and then that no deterministic mechanism can simultaneously guarantee better than $2$ consistency and better than $1 + \sqrt{2}$ robustness, even if it is augmented with the strongest predictions. Similarly, no truthful randomized mechanism can simultaneously guarantee $1$ consistency and better than $2$ robustness.
The former result proves the optimality of a previously introduced $1$-consistent and $1+\sqrt{2}$-robust truthful deterministic mechanism provided only with a prediction regarding the optimal facility location~\citep{DBLP:conf/sigecom/AgrawalBGOT22}. We also show that the latter is tight.

\subsection{Related Work}\label{sec:related}

\paragraph*{Strategic facility location.}
\citet{moulin1980strategy} provided a characterization of deterministic truthful facility location mechanisms on the line and introduced the median mechanism, which returns the median of the agent location profile $\mathbf{x} = \langle x_1, \cdots, x_n\rangle$. The median mechanism is known to be truthful, providing an optimal solution for the Utilitarian Social Cost and achieving a $2$-approximation for the Egalitarian Social Cost, as demonstrated by \cite{DBLP:journals/teco/ProcacciaT13}. Notably, \cite{DBLP:journals/teco/ProcacciaT13} showed that this $2$-approximation factor represents the best performance achievable by any deterministic truthful mechanism. Building on this, \cite{RePEc:oup:restud:v:50:y:1983:i:1:p:153-170.} extended these results to the Euclidean space by employing median schemes independently in each dimension. Subsequently, \cite{RePEc:eee:jetheo:v:61:y:1993:i:2:p:262-289} further generalized this outcome to any $L_1$-norms. Other settings explored include general metric spaces~\citep{alon2010strategyproof} and d-dimensional Euclidean spaces~\citep{meir2019strategyproof, walsh2020strategy, el2021strategyproofness, GH21}, as well as circles~\citep{alon2010strategyproof, meir2019strategyproof} and trees~\citep{alon2010strategyproof, feldman2013strategyproof}. Fundamental results in truthful facility location often focus on characterizing the space of truthful mechanisms. For the one-dimensional case, Moulin's characterization~\citep{moulin1980strategy} reveals that all deterministic truthful mechanisms belong to the ``general median mechanisms (GCM)'' family. For the two-dimensional case, a similar characterization was provided by~\cite{peters1993range}. For a comprehensive review of previous work on this problem, see the survey by~\cite{chan2021facilitylocation}.

\paragraph*{Learning-augmented algorithms.} 
Worst-case analysis on its own is often not informative enough, and several alternative measures have been proposed to overcome its limitations~\citep{roughgarden2021beyond}. Learning-augmented algorithms, or algorithms with predictions~\citep{mitzenmacher2020algorithms}, aim to address these limitations by incorporating predictions into the algorithm's design.
\cite{LykourisV21} studied the online caching problem and introduced two main metrics, consistency and robustness, to evaluate the performance of these algorithms. These initial results were improved by subsequent works~\citep{antoniadis2020online, Wei20, chlkedowski2021robust, rohatgi2020near}.
Various online problems have been explored using learning-augmented algorithms, including scheduling problems~\citep{KPZ18, bamas2020learning, mitzenmacher2020scheduling, lattanzi2020online, AJS22SWAT}, online selection and matching problems~\citep{antoniadis2020secretary, dutting2021secretaries}, the online knapsack problem~\citep{im2021online}, online Nash social welfare maximization~\citep{banerjee2020online}, and several online graph algorithms~\citep{azar2022online}.
The online version of the facility location problem, introduced by~\cite{DBLP:conf/focs/Meyerson01}, involves points arriving online, requiring us to assign them irrevocably to either an existing facility or open a new facility. The objective is to minimize the distance of each agent to the assigned facility along with the cost of opening the facilities. \cite{DBLP:conf/nips/AlmanzaCLPR21}, \cite{DBLP:conf/iclr/JiangLLTZ22}, and \cite{DBLP:journals/corr/abs-2107-08277} studied this problem augmented with predictions regarding the location of the optimal facility for each incoming point. The literature also covers learning-augmented algorithms for classic data structures~\citep{kraska2018case}, Bloom filters~\citep{mitzenmacher2019model}, and a broader framework of online primal-dual algorithms~\citep{bamas2020primal}. \cite{berger2024learningaugmentedmetricdistortionpqveto} explore the metric distortion problem, which is closely related to the facility location problem for general metric spaces, and introduce a learning-augmented algorithm that achieves the optimal robustness-consistency trade-off.
We direct interested readers to a curated and frequently updated list of papers in this area~\citep{website}.

\paragraph*{Learning-augmented mechanism design.} 
The framework of learning-augmented mechanism design was first introduced by \cite{DBLP:conf/sigecom/AgrawalBGOT22} and \cite{xu2022mechanism}. \cite{DBLP:conf/sigecom/AgrawalBGOT22} focused on the strategic facility location problem, proposing mechanisms that leverage predictions to improve performance while maintaining truthfulness. Their work achieves the best possible consistency and robustness trade-off given a prediction of the optimal facility location. They also evaluated the performance of their mechanisms as a function of the prediction error. More recently, \cite{CSV24} provided performance bounds based on an alternative measure of prediction error.
\cite{barak2024macadvicefacilitylocation} studied this problem assuming that the predictions are regarding each agent's location, but a small fraction of these predictions may be arbitrarily inaccurate. Meanwhile, \cite{chen2024strategicfacilitylocationpredictions} evaluated non-truthful mechanisms with respect to their price of anarchy. \cite{istrate2022mechanism} and \cite{fang2024mechanism} studied the obnoxious facility location version of this problem, aiming to design truthful mechanisms. Apart from the facility location problem, other works in learning-augmented mechanism design have been conducted in various contexts, including strategic scheduling~\citep{xu2022mechanism, balkanski_et_al:LIPIcs.ITCS.2023.11}, auctions~\citep{xu2022mechanism, balkanski2023online, balkanski_et_al:LIPIcs.ITCS.2023.11, lu2023competitive, caragiannis2024randomized, gkatzelis2024clockauctionsaugmentedunreliable}, bicriteria mechanism design~\citep{ 10.5555/3666122.3667901}, graph problems with private input~\citep{colini2024trust}, and equilibrium analysis~\citep{10.1145/3490486.3538296, istrate2024equilibriamultiagentonlineproblems}. \cite{balkanski2023online} brought together the line of work on learning-augmented mechanism design with the literature on online algorithms with predictions by studying online mechanism design with predictions.

\section{Preliminaries}
\label{sec:prelim}
In  the single facility location problem, there are $n$ strategic agents with a location profile denoted by $ \mathbf{x} = \langle x_1, \cdots, x_n\rangle$,  where $x_i$ corresponds to the location of agent $i$. A mechanism $f(\mathbf{x})$ outputs a, potentially randomized,  location for the facility.  The cost  incurred by agent $i$ is the expected Euclidean distance $\mathbb{E} [d(x_i, f(\mathbf{x}))]$ between the facility location $f(\mathbf{x})$ and their location $x_i$.

Two renowned cost functions considered in this scenario are Egalitarian Social Cost and Utilitarian Social Cost. In this work, the goal is to optimize the Egalitarian Social Cost $C(f, \mathbf{x}) = \mathbb{E}[ \max_{x_i \in \mathbf{x}} d(x_i, f(\mathbf{x}))]$, representing the expected maximum cost experienced by a single agent for each possible outcome of $f(\mathbf{x})$.
The optimal location for the facility in the two-dimensional Euclidean space corresponds to the center of the smallest circle that encloses all the points, denoted by $o(\mathbf{x})$.

To minimize the social cost, a mechanism needs to ask the agents to report their preferred locations, $\mathbf{x} \in \mathbb{R}^{2n}$, and then use this information to determine the facility location $f(\mathbf{x}) \in \mathbb{R}^2$. However, the preferred location $x_i$ of each agent $i$ is private information, and they can choose to misreport their preferred location if that can reduce their own cost. A mechanism $f : \mathbb{R}^{2n} \rightarrow \mathbb{R}^2$ is considered truthful when no individual agent can benefit by misreporting their location, i.e., for all instances $\mathbf{x} \in \mathbb{R}^{2n}$, every agent $i \in [n]$, and every deviation $x'_i \in \mathbb{R}^2$, we have that $d(x_i, f(\mathbf{x})) \leq d(x_i, f(\mathbf{x}_{-i}, x'_i))$, where $\mathbf{x}_{-i} = \langle x_1, \cdots,x_{i-1}, x_{i+1}, \cdots, x_n\rangle$ is the vector of the locations of all agents except agent $i$.

In the context of randomized mechanisms, we can distinguish between two forms of truthfulness: universally truthful and truthful in expectation. A universally truthful mechanism involves a randomization over deterministic truthful mechanisms, with weights that may depend on the input. On the other hand, a mechanism is considered truthful in expectation if truth-telling yields the agent the maximum expected value. Specifically, a mechanism $f : \mathbb{R}^{2n} \rightarrow \mathbb{R}^2$ is truthful in expectation if for all instances $\mathbf{x} \in \mathbb{R}^{2n}$, every agent $i \in [n]$, and every deviation $x'_i \in \mathbb{R}^2$, we have that $\mathbb{E} [d(x_i, f(\mathbf{x}))] \leq \mathbb{E}[d(x_i, f(\mathbf{x}_{-i}, x'_i))]$.

We focus on mechanisms that are both unanimous and anonymous. A mechanism is unanimous if, when all points $\mathbf{x}$ are located at the same position ($x_i = x_j$ for all $i, j \in [n]$), it places the facility at that specific location, i.e., $f(\mathbf{x}) = x_i$. To ensure bounded robustness, a mechanism must be unanimous, as the optimal cost is zero when the facility is at the same location as all points, whereas placing it elsewhere incurs a positive cost. A mechanism is anonymous if its outcome is independent of the agents' identities, meaning it remains unchanged under any permutation of the agents. Finally, we also assume the mechanism is scale-independent, i.e., if we multiply every coordinate of every agent by the same factor, the coordinate of the chosen facility location are scaled in the same way; this captures the fact that it is only the relative distances that really matter in this problem.

Learning-augmented algorithms encompass a class of algorithms that enhance their decision-making process by integrating pre-computed predictions or forecasts. These predictions, obtained from various sources, like statistical models, serve as inputs without requiring real-time learning from new data. For instance, in the single facility location problem, one might consider predictions $\predX$ regarding all of the agent's preferred locations, a prediction $\predF$ regarding the optimal facility location, or a prediction regarding the identities of the most extreme agents $\predE$ (the ones that suffer the maximum cost in the optimal solution), which, as demonstrated in this paper, proves to be quite useful. We denote mechanisms enhanced with predictions in general as $f(\mathbf{x}, \predGen)$. Specifically, for each prediction setting like $\predX$ or $\predF$, we denote a mechanism enhanced with $\predX$ and $\predF$ by $f(\mathbf{x}, \predX)$ and $f(\mathbf{x}, \predF)$, respectively. We define mechanisms enhanced with other kinds of predictions in a similar way.

The effectiveness of learning-augmented mechanisms is evaluated using their consistency and robustness. If $\mathbf{x} \vartriangleleft *$ denotes all instances $\mathbf{x}$ for which prediction $*$ is accurate, a mechanism is 
\begin{itemize}
    \item  $\alpha$-\textit{consistent} if it achieves an $\alpha$-\textit{approximation} when the prediction is correct, i.e.,
    $$\max_{\mathbf{x},  * : \mathbf{x} \vartriangleleft *} \left\{ \frac{C(f(\mathbf{x}, \predGen), \mathbf{x})}{C(o(\mathbf{x}), \mathbf{x})} \right\} \leq \alpha.$$
    
    \item $\beta$-\textit{robust} if it  maintains a $\beta$-\textit{approximation} regardless of the quality of the prediction, i.e., 
    
    $$\max_{\mathbf{x}, *}  \left\{ \frac{C(f(\mathbf{x}, \predGen), \mathbf{x})}{C(o(\mathbf{x}), \mathbf{x})} \right\} \leq \beta.$$
    
\end{itemize}

\section{Results for the Line}\label{Line}

In this section, we provide a lower bound regarding the performance of any randomized mechanism for the line, even if it is equipped with the strongest type of prediction, i.e., a prediction $\predX$ regarding the preferred location of every agent. 

\begin{theorem}\label{thm: Ran-lowerbound-line-predX}
No mechanism for the line that is truthful in expectation and guarantees $1+\delta$ consistency for some $\delta \in [0, 0.5]$ can also guarantee robustness better than $2-\delta$, even if it is provided with full predictions $\predX$ containing each of the agents' locations.
\end{theorem}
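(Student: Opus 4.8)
The plan is to fix the prediction and reduce the statement to a purely incentive-theoretic fact about a single truthful mechanism. Since the prediction is an exogenous input, for any fixed full prediction $\predX$ the induced map $g(\cdot) := f(\cdot, \predX)$ is a mechanism that is truthful in expectation on the agents' reports, and this truthfulness holds regardless of whether $\predX$ is accurate. Consistency then says that $g$ is near-optimal (i.e.\ $(1+\delta)$-approximate) on the single instance $\mathbf{x} = \predX$, while robustness says $g$ is $\beta$-approximate on \emph{every} instance. So it suffices to show that every truthful-in-expectation line mechanism $g$ that is $(1+\delta)$-approximate on one designated instance must be at least $(2-\delta)$-approximate on some other instance. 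I would take the designated (consistent) instance to be two agents at $0$ and $1$, whose optimal facility is the midpoint $1/2$ with cost $1/2$.

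First I would turn consistency into a concentration statement. Writing $Y = g(0,1)$ for the random facility, the egalitarian cost on $(0,1)$ equals $\mathbb{E}[\max(|Y|,|Y-1|)] = 1/2 + \mathbb{E}[|Y - 1/2|]$, so $(1+\delta)$-consistency is exactly the bound $\mathbb{E}[|Y-1/2|] \le \delta/2$; in particular the facility lies on average within $\delta/2$ of the center. Next I would analyze the robustness instance with the two agents relocated to $1$ and $2$, which both lie to one side of the concentrated facility and whose optimum is the midpoint $3/2$ with cost $1/2$. The heart of the argument is that a truthful line mechanism cannot simultaneously keep its facility near $1/2$ on $(0,1)$ and near $3/2$ on $(1,2)$: the incentive constraints force the mechanism's response to a single agent to saturate at the other agent's report, so probability mass sitting near $1/2$ on $(0,1)$ gets ``stuck'' near the near agent at $1$ on $(1,2)$.

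Concretely, using the characterization of randomized truthful line mechanisms as distributions over generalized-median (phantom) rules~\citep{moulin1980strategy}, I would write $g$ as a distribution over $\text{med}(x_1,x_2,\alpha)$. On $(0,1)$ we have $Y=\text{med}(0,1,\alpha)=\min(\max(\alpha,0),1)$, so the concentration bound forces $\Pr[\alpha\le 0]+\Pr[\alpha\ge 1]\le \delta$, and hence $B := \Pr[\alpha\in(1,2)] \le \delta$. On $(1,2)$ every phantom with $\alpha\le 1$ yields facility exactly $1$ (cost $1$, ratio $2$), every $\alpha\ge 2$ yields facility $2$ (again cost $1$), and $\alpha\in(1,2)$ yields a cost in $[1/2,1]$. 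Thus the expected cost is at least $\Pr[\alpha\le 1]+\Pr[\alpha\ge 2]+\tfrac12 B = 1 - B/2 \ge 1-\delta/2$, which against the optimum $1/2$ gives the desired ratio $2-\delta$ (recovering $2$ at $\delta=0$ and the randomized bound $3/2$ at $\delta=1/2$).

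The main obstacle is upgrading this from universally truthful mechanisms, where the phantom characterization is immediate, to mechanisms that are only truthful in expectation, as the theorem demands. A single deviation between $(0,1)$ and $(1,2)$ is provably insufficient: the two incentive inequalities it produces are satisfied by a facility placed optimally at $3/2$, so any correct proof must exploit the full family of incentive constraints rather than one deviation. I therefore expect the delicate step to be either establishing (or invoking) a characterization of truthful-in-expectation line mechanisms as distributions over phantom medians, or replacing it by a direct argument that uses the entire response curve $t\mapsto g(t,1)$ — its monotonicity together with the unanimity boundary condition $g(1,1)=1$ and the forced saturation — to show that the mass escaping the ``stuck at $1$'' regime on $(1,2)$ is controlled by the consistency slack $\delta$.
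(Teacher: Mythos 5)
Your high-level framing is sound and matches the paper's implicit structure: fixing the prediction $\predX$ does yield a single mechanism $g(\cdot)=f(\cdot,\predX)$ that is truthful in expectation, $(1+\delta)$-approximate on the predicted instance, and $\beta$-approximate everywhere, so the theorem does reduce to a statement about ordinary truthful mechanisms, and your cost decomposition $C = \mathrm{OPT} + \mathbb{E}[|Y-M|]$ is exactly the paper's Equation~(2). The fatal problem is the step you yourself flag: there is no characterization of truthful-in-expectation line mechanisms as distributions over fixed phantom medians, and the obstruction is not a technicality — the LRM mechanism (return $x_L$ w.p.~$1/4$, $x_R$ w.p.~$1/4$, the midpoint $M$ w.p.~$1/2$), which is the very mechanism realizing the $\delta=1/2$ endpoint of this trade-off, is truthful in expectation but is \emph{not} a mixture of phantom rules $\mathrm{med}(x_1,x_2,\alpha)$: matching LRM on the instance $(0,1)$ forces $\Pr[\alpha=1/2]=1/2$, and then on the instance $(0,2)$ that mass sits at $1/2$ rather than tracking the midpoint $1$. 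So your Step~B quantifies over a strictly smaller class than the theorem requires, and the entire quantitative argument ($B\le\delta$, hence cost $\ge 1-\delta/2$ on $(1,2)$) collapses outside that class. Your proposed fallbacks (proving the characterization, or a monotonicity-plus-saturation argument on the response curve) are not carried out, and your own observation that the single deviation linking $(0,1)$ and $(1,2)$ is insufficient shows that whatever replaces the characterization must aggregate a global family of incentive constraints — which is precisely the hard content of the theorem.

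It is worth seeing how the paper fills this hole, because it does so with a much weaker structural statement than a characterization. Lemma~\ref{lem:reduction} (the \MOnlyInside\ reduction) shows that any truthful-in-expectation mechanism can, on two-agent instances, be replaced by one whose interior support is only the midpoint $M$ (mass strictly inside $(x_L,M)\cup(M,x_R)$ is redistributed as a convex combination onto $\{x_L,M\}$ and $\{M,x_R\}$), and this redistribution preserves both the social cost and every agent's expected cost, hence truthfulness. With that structure in hand, the paper derives a contradiction from two deviations anchored at a \emph{probe} instance distinct from the predicted one: an agent deviating outward so that its true location becomes the new midpoint (robustness then caps its deviation cost), and an agent deviating outward so that the reported instance coincides with the prediction (consistency then caps its deviation cost); these yield the incompatible inequalities $P(\leq L) > \delta - P(M)/2$ and $P(\leq L) \le \delta - P(M)/2$. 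Note also that this argument localizes the robustness violation at the reflection instance (in your coordinates, $(-1/2,1/2)$ or $(1/2,3/2)$), not at $(1,2)$; your claim that $(1,2)$ must be a bad instance may well be true, but it is an additional unproven assertion, not something the paper's machinery gives you for free.
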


The proof consists of two main parts. In Subsection~\ref{sec:onlym}, we show that, for any instance involving two agents, we can without loss of generality restrict our attention to a class of mechanisms that we call \MOnlyInside \ mechanisms. In Subsection~\ref{sec:mainproof}, we then show  the desired lower bound for \MOnlyInside \ mechanisms.

\subsection{The reduction to \MOnlyInside \ mechanisms}
\label{sec:onlym}

We introduce the following notations specific to this section. Let $x_L$ be the leftmost reported location and $x_R$ be the rightmost reported location on the line. Therefore, we have $x_L \leq x_R$. Let $M$ denote the midpoint of these two extreme points, i.e., $M = (x_L + x_R)/2$, which would also correspond to the optimal facility location. For simplicity, we sometimes write $f(x_1, \ldots, x_n)$, dropping the angle brackets $\langle \rangle$.  \MOnlyInside \ is the class of mechanisms  that, whenever they choose a location within the interval $(x_L, x_R)$, then this location is always $M$. 

\begin{definition}[$\MOnlyInside$  mechanisms]
    A mechanism $f$ for the line is an $\MOnlyInside$ mechanism if $P[f(\mathbf{x}) \in (x_L, x_R) \setminus \{M\}] = 0$.
\end{definition}

The main lemma for the proof of Theorem~\ref{thm: Ran-lowerbound-line-predX} is the following reduction that allows to restrict our attention to \MOnlyInside \ mechanisms.

\begin{lemma}\label{lem:reduction}
For any problem instance involving two agents with reported locations $\mathbf{x} = \langle x_L, x_R \rangle$ on the line, and any randomized truthful in expectation mechanism achieving $\alpha$ consistency and $\beta$ robustness over this class of instances, there exists a randomized $\MOnlyInside$ mechanism that is truthful in expectation and achieves the same consistency and robustness guarantees.
\end{lemma}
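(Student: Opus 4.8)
The plan is to take an arbitrary truthful-in-expectation mechanism $f$ with consistency $\alpha$ and robustness $\beta$ on two-agent instances $\mathbf{x} = \langle x_L, x_R\rangle$, and convert it into an $\MOnlyInside$ mechanism $g$ that retains truthfulness and the same two guarantees. The natural construction is to leave untouched all probability mass that $f$ places \emph{outside} the open interval $(x_L, x_R)$ (i.e.\ on $(-\infty, x_L] \cup [x_R, \infty)$), and to ``collapse'' all the mass that $f$ places strictly inside $(x_L, x_R)$ onto the midpoint $M = (x_L+x_R)/2$. Formally, $g$ samples a location according to $f$, and whenever the sample lands in $(x_L, x_R)\setminus\{M\}$ it is replaced by $M$; mass already at $M$ or outside the interval is kept as is. By construction $g$ is an $\MOnlyInside$ mechanism, so the content of the lemma is that this collapse does not hurt either objective and does not break incentives.

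First I would verify the approximation guarantees, which is the easy direction. The egalitarian cost of placing the facility at any point $y \in [x_L, x_R]$ is $\max\{y - x_L,\ x_R - y\}$, and on this interval $M$ is the unique minimizer of this quantity, with value $(x_R - x_L)/2 = \opt$. Hence replacing any interior sample $y$ by $M$ can only decrease (never increase) the realized egalitarian cost for that outcome. Since the cost contributed by the outside mass is unchanged, the expected cost of $g$ is at most that of $f$ on every instance, so $g$'s consistency and robustness are no worse than $\alpha$ and $\beta$ respectively. I would state this via a pointwise (outcome-by-outcome) cost comparison and then take expectations.

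The main obstacle is preserving truthfulness in expectation, since moving probability mass changes each agent's expected distance to the facility, and I must rule out a profitable deviation. I would argue agent by agent, say for the left agent at $x_L$ (the right agent is symmetric). Fixing the other agent's report, I need to compare agent $L$'s expected cost under truthful reporting against every possible misreport, now under $g$ rather than $f$. The key structural observation is that for agent $L$, every point in $(x_L, x_R)$ is \emph{closer} to $x_L$ than, or equal distance to, the point $x_R$, and moving an interior outcome to $M$ changes agent $L$'s distance in a controlled way: the distance to $M$ is $(x_R - x_L)/2$, which lies between the distances to $x_L$ and to $x_R$. The subtle point is that a misreport by agent $L$ changes the interval endpoints (and hence what ``inside'' means and where $M$ sits), so the collapse operation is defined relative to a \emph{different} interval after the deviation. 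I therefore expect to organize the proof by casing on whether the deviation $x_L'$ falls inside, to the left of, or to the right of the truthful interval, and in each case bound agent $L$'s expected distance under $g(\mathbf{x}_{-L}, x_L')$ from below using the truthfulness of the original $f$ together with the fact that the collapse only moves mass toward $M$, which is itself at distance exactly $\opt$ from $x_L$. The cleanest route, which I would pursue first, is to show that the collapse operation weakly \emph{increases} agent $L$'s expected distance to the facility whenever agent $L$ deviates to pull the facility toward herself, while leaving the truthful expected distance controlled, so that no deviation can become more attractive under $g$ than it already was under the truthful-for-$f$ benchmark; any residual slack would then be closed by invoking the truthfulness of $f$ directly on the outside-mass component, which $g$ reproduces verbatim.
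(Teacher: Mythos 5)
Your construction is genuinely different from the paper's, and the difference is fatal: collapsing \emph{all} interior mass onto the midpoint $M$ does not preserve truthfulness, and no proof can repair this because the construction itself admits a counterexample. Consider the mechanism $f$ that, on input $\langle x_L, x_R\rangle$, returns $x_L + \frac{1}{4}(x_R - x_L)$ with probability $\frac{1}{2}$, returns $x_L$ with probability $\frac{3}{8}$, and returns $x_R$ with probability $\frac{1}{8}$. This $f$ is anonymous, unanimous, scale-independent, $\frac{7}{4}$-robust, and truthful in expectation: on $\langle 0,1\rangle$ the left agent's expected cost is $\frac{1}{4}$ under truth-telling and $\max\{\frac{1}{4}, \frac{3\delta}{4}\}$ under a deviation to $-\delta$, while the right agent's expected cost is exactly $\frac{3}{4}$ under truth-telling and under every outward deviation (she is indifferent). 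Now apply your collapse to obtain $g$: it returns $M$ w.p.\ $\frac{1}{2}$, $x_L$ w.p.\ $\frac{3}{8}$, $x_R$ w.p.\ $\frac{1}{8}$. On $\langle 0,1\rangle$ the right agent's truthful cost under $g$ is $\frac{1}{2}\cdot\frac{1}{2} + \frac{3}{8}\cdot 1 = \frac{5}{8}$, but if she misreports $x_R' = 2$, then $g(0,2)$ returns $M' = 1$ (her true location) w.p.\ $\frac{1}{2}$, $0$ w.p.\ $\frac{3}{8}$, and $2$ w.p.\ $\frac{1}{8}$, giving cost $\frac{3}{8} + \frac{1}{8} = \frac{1}{2} < \frac{5}{8}$. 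So $g$ is not truthful. The failure is precisely at the step you flagged as your ``cleanest route'': it is false that the collapse weakly increases a deviator's post-deviation cost. An agent who exaggerates outward drags the new midpoint $M'$ onto (or toward) her true location, so collapsing interior mass to $M'$ can strictly \emph{reduce} her cost after deviating, while her truthful cost does not drop by as much.

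The paper's reduction uses a different, \emph{mean-preserving} redistribution rather than a collapse: the mass that $f$ places in $(x_L, M)$ is split between the two endpoints $x_L$ and $M$ with weights chosen so that its conditional expectation $\pi_\ell$ is unchanged, and the mass in $(M, x_R)$ is likewise split between $M$ and $x_R$ preserving $\pi_r$. Mean preservation delivers exactly what your collapse lacks: (i) whenever an agent's true location lies outside the half-interval whose mass is moved, her distance is a linear function on that half-interval, so her expected cost is \emph{exactly} unchanged --- in particular both agents' truthful costs and the social cost are preserved (the paper's Lemmas~\ref{lem:costmaintained} and~\ref{lem:agentsequalcost}); and (ii) when a deviator's true location lies inside the half-interval whose mass is moved, pushing mass to that interval's endpoints is a mean-preserving spread, and since distance to a fixed point is convex, this weakly \emph{increases} her post-deviation cost (the paper's Lemma~\ref{lem:truthfulmaintained}). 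Together these give truthfulness of the modified mechanism from truthfulness of $f$; in the counterexample above, the mean-preserving version places mass $\frac{5}{8}, \frac{1}{4}, \frac{1}{8}$ on $x_L, M, x_R$ and leaves the right agent exactly indifferent to the deviation $x_R'=2$, as required. Your outcome-by-outcome argument for consistency and robustness is correct (collapsing to $M$ only lowers the egalitarian cost), but that was never the obstacle; without the mean-preserving structure, the incentive argument cannot be completed.
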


The remainder of Section~\ref{sec:onlym} is devoted to the proof of Lemma~\ref{lem:reduction}.
    Consider any randomized mechanism $f(x_L, x_R)$ and let $p_\ell = \mathbb{P}[f(x_L,x_R)\in (x_L,M)]$ and $p_r = \mathbb{P}[f(x_L,x_R)\in (M,x_R)]$ represent the probabilities that this mechanism chooses a facility location in $(x_L,M)$ and $(M,x_R)$, respectively. If $p_\ell = p_r = 0$, the mechanism already satisfies the desired property, and the proof is complete. Otherwise, if $p_\ell > 0$, define $\pi_\ell = \mathbb{E}[f(x_L,x_R) ~|~ f(x_L,x_R)\in (x_L,M)]$ and, if $p_r > 0$, define $\pi_r = \mathbb{E}[f(x_L,x_R) ~|~ f(x_L,x_R)\in (M,x_R)]$ as the expected locations returned by the mechanism when restricted to $(x_L,M)$ and $(M,x_R)$, respectively.

\begin{figure}
    \centering
    \includegraphics[width=1.03\linewidth]{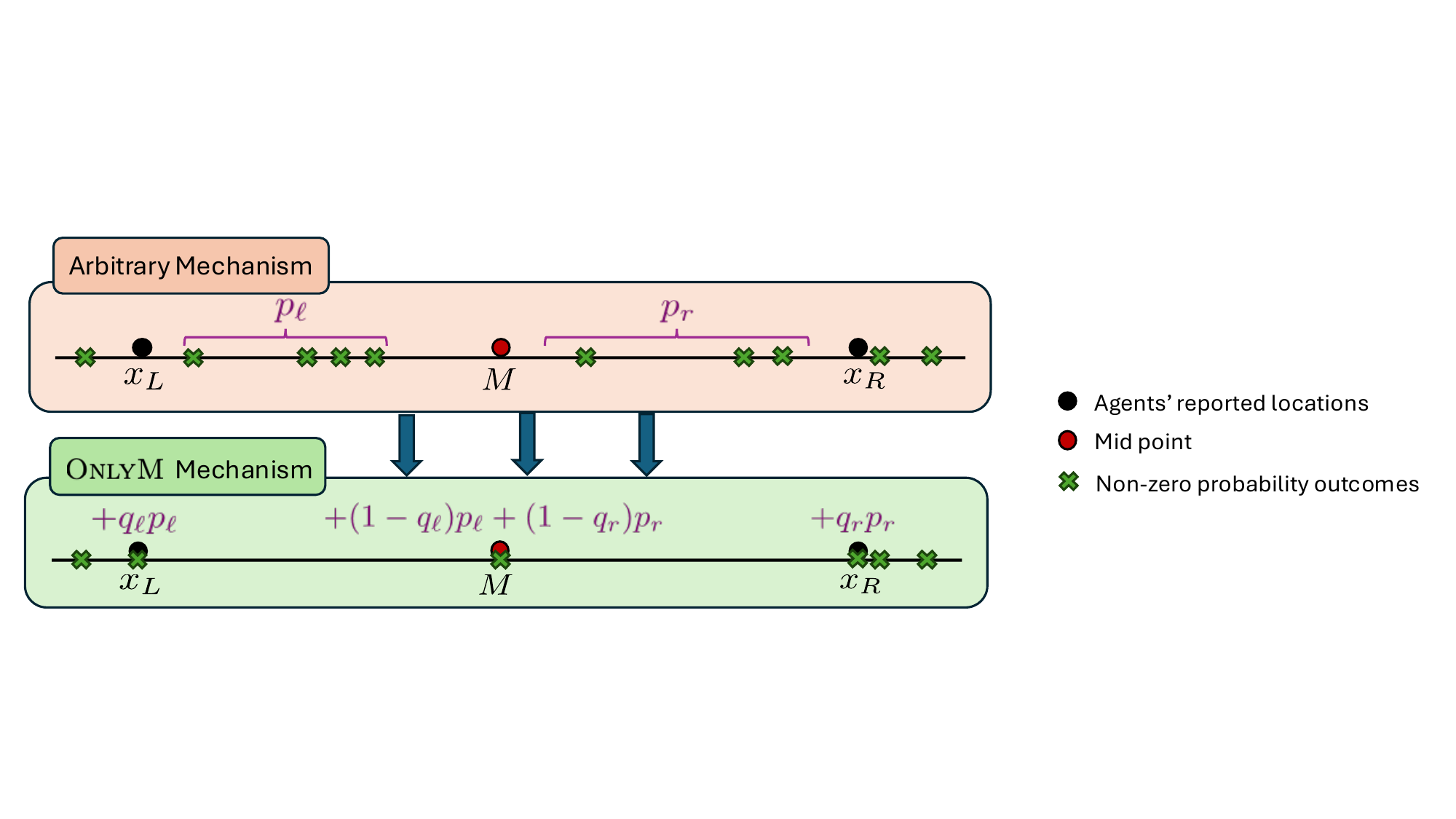}
    \caption{The original mechanism $f(.)$ and the new \MOnlyInside\ 
 mechanism $f'(.)$ in the proof of Lemma $1$, where $p_\ell = \mathbb{P}[f(x_L,x_R)\in (x_L,M)]$ and $p_r = \mathbb{P}[f(x_L,x_R)\in (M,x_R)]$, $\pi_\ell= \mathbb{E}[f(x_L,x_R) ~|~ f(x_L,x_R)\in (x_L,M)]$ and $\pi_r= \mathbb{E}[f(x_L,x_R) ~|~ f(x_L,x_R)\in (M,x_R)]$, and $q_\ell$ and $q_r$ are such that $\pi_\ell=q_\ell x_L+ (1-q_\ell)M$ and $\pi_r = q_r x_R+ (1-q_r)M$.}
    \label{fig:Lemma1}
\end{figure}

    Since $\pi_\ell$ lies in $(x_L,M)$ and $\pi_r$ lies in $(M,x_R)$, we can express these two points as convex combinations of $x_L$, $M$, and $x_R$: $\pi_\ell = q_\ell x_L + (1-q_\ell)M$ and $\pi_r = q_r x_R + (1-q_r)M$ for some $q_\ell, q_r \in (0, 1)$. We then modify the original mechanism $f$ to a new $\MOnlyInside$ mechanism $f'$ defined as 
    $$\mathbb{P}[f'(x_L,x_R) = x] = \begin{cases} 
    \mathbb{P}[f(x_L,x_R) = x] & \text{ if } x < x_L \text{ or } x > x_R\\
    0 & \text{ if } x \in (x_L, M) \cup (M, x_R) \\
    \mathbb{P}[f(x_L,x_R) = x] + q_\ell p_\ell & \text{ if } x = x_L \\
     \mathbb{P}[f(x_L,x_R) = x] + (1-q_\ell)p_\ell + (1-q_r)p_r & \text{ if } x = M \\
    \mathbb{P}[f(x_L,x_R) = x] + q_r p_r & \text{ if } x = x_R \\
    \end{cases}
    $$

The construction of $f'$ is illustrated in Figure~\ref{fig:Lemma1}. To show that mechanism $f'$ achieves the same consistency and robustness guarantees as mechanism $f$, we show that their expected costs are equal on all instances with two agents.
    \begin{lemma} 
    \label{lem:costmaintained}
For all mechanisms $f$ for the line and instances  $\mathbf{x} = \langle x_L, x_R \rangle$ with two agents, the expected costs of $f$ and $f'$ over $\mathbf{x}$  are equal, i.e., $C(f, \mathbf{x}) = C(f', \mathbf{x})$.
    \end{lemma}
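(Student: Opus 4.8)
Lemma 2 (labeled `lem:costmaintained`) says that for any two-agent instance $\mathbf{x} = \langle x_L, x_R\rangle$, the modified mechanism $f'$ has the same expected egalitarian cost as the original $f$. The key structural fact is that for a two-agent instance, the egalitarian cost of placing the facility at location $y$ is $\max(|y - x_L|, |y - x_R|)$, and crucially this cost function is **linear** on each of the three regions determined by the midpoint $M$: on the interval $[x_L, M]$ the cost equals $x_R - y$ (the right agent is the max), and on $[M, x_R]$ the cost equals $y - x_L$.

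**Let me think about what $f'$ does.** The transformation takes all the probability mass $p_\ell$ that $f$ put in the open interval $(x_L, M)$ and redistributes it: a fraction $q_\ell$ goes to $x_L$ and $(1-q_\ell)$ goes to $M$. Similarly the mass $p_r$ in $(M, x_R)$ goes: fraction $q_r$ to $x_R$ and $(1-q_r)$ to $M$. The weights $q_\ell, q_r$ are exactly the barycentric coordinates making $\pi_\ell = q_\ell x_L + (1-q_\ell)M$ the conditional *mean* location of $f$ on the left region. Mass outside $[x_L, x_R]$ is untouched.

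**So the proof should be:** Expected cost is $C(f, \mathbf{x}) = \mathbb{E}[\text{cost}(f(\mathbf{x}))]$. Split this expectation into the contribution from $(x_L, M)$, from $(M, x_R)$, and from everything else (the points $x_L, M, x_R$ themselves and the exterior). For the exterior and the three atoms, $f$ and $f'$ agree... wait, not quite — $f'$ *adds* mass at $x_L, M, x_R$. Let me recompute carefully.

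Let me verify the left region contributes equally. Under $f$, the expected cost contribution from the event $\{f \in (x_L, M)\}$ is $p_\ell \cdot \mathbb{E}[\text{cost}(f) \mid f \in (x_L,M)]$. Since cost is linear ($= x_R - y$) on this region, $\mathbb{E}[\text{cost} \mid \cdot] = x_R - \mathbb{E}[y \mid \cdot] = x_R - \pi_\ell$. Under $f'$, this mass is moved to $x_L$ (cost $x_R - x_L$) and $M$ (cost $\frac{x_R - x_L}{2}$). The contribution is $p_\ell[q_\ell(x_R - x_L) + (1-q_\ell)\frac{x_R-x_L}{2}]$. Because $\pi_\ell = q_\ell x_L + (1-q_\ell)M$ and cost is linear, the cost of the mean equals the mean of the costs: $x_R - \pi_\ell = q_\ell(x_R - x_L) + (1-q_\ell)(x_R - M)$. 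These match. Symmetric for the right.

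**My proof plan:**

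The plan is to exploit the linearity of the two-agent egalitarian cost function on each side of the midpoint. First I would write down explicitly that for a location $y$, the cost is $g(y) := \max(|y-x_L|, |y-x_R|)$, and observe that $g(y) = x_R - y$ for $y \in [x_L, M]$ and $g(y) = y - x_L$ for $y \in [M, x_R]$ — both affine functions of $y$. I would then decompose the expected cost $C(f,\mathbf{x}) = \mathbb{E}[g(f(\mathbf{x}))]$ according to which region the realized facility location falls in: the exterior $(-\infty, x_L) \cup (x_R, \infty)$, the open left interval $(x_L, M)$, the open right interval $(M, x_R)$, and the three atoms $\{x_L, M, x_R\}$.

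Next I would compare the two mechanisms region by region. Since $f'$ and $f$ assign identical probability to every point of the exterior and identical probability to each of the points $x_L, M, x_R$ that $f$ already placed mass on (the construction only \emph{adds} mass to these three atoms while clearing the open intervals), the only discrepancy lies in how $f'$ treats the probability mass $p_\ell$ and $p_r$ that $f$ placed in the open intervals. The core computation is to show that relocating this mass to the endpoints and midpoint, using the weights $q_\ell, q_r$, preserves the expected cost contribution. For the left interval, I would use that $\mathbb{E}[g(f)\mid f\in(x_L,M)] = x_R - \pi_\ell$ by linearity, and that $f'$ replaces this with $q_\ell\, g(x_L) + (1-q_\ell)\,g(M)$; since $\pi_\ell = q_\ell x_L + (1-q_\ell)M$ is the same convex combination and $g$ is affine on $[x_L,M]$, Jensen holds with equality, so $g(\pi_\ell) = q_\ell g(x_L) + (1-q_\ell)g(M)$, giving equal contributions. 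The right interval is symmetric.

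The main (and really only) obstacle is purely bookkeeping: ensuring the atoms at $x_L, M, x_R$ are accounted for exactly once and that the added masses $q_\ell p_\ell$, $(1-q_\ell)p_\ell + (1-q_r)p_r$, $q_r p_r$ are matched against the removed interval masses without double counting. Once the linearity-equals-convexity observation is in place, the equality $C(f,\mathbf{x}) = C(f',\mathbf{x})$ follows by summing the four regional contributions. I do not anticipate any genuine difficulty; the lemma is essentially a statement that pushing mass toward the endpoints along the direction of the conditional mean is cost-neutral precisely because the cost is piecewise linear in the two-agent case.
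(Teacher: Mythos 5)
Your proposal is correct and follows essentially the same route as the paper's proof: both rest on the observation that the two-agent egalitarian cost is affine on each of $[x_L, M]$ and $[M, x_R]$, so replacing the conditional distribution on each open interval by the two-point distribution on its endpoints with the same mean (weights $q_\ell, q_r$) leaves the expected cost contribution unchanged, i.e.\ $d(x_R,\pi_\ell) = q_\ell\, d(x_R,x_L) + (1-q_\ell)\, d(x_R,M)$ and symmetrically on the right. The paper packages this as a single expansion of $C(f',\mathbf{x}) - C(f,\mathbf{x})$ whose terms cancel, while you organize it region by region with Jensen-equality for affine functions, but the underlying argument is identical.
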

   \begin{proof}
   First, note that 
    \begin{align*}
        & \mathbb{E}\left[\max_{x_i \in \{x_L, x_R\}} d(x_i, f(\mathbf{x})) \mid f(\mathbf{x}) \in (x_L, M)\right] \cdot  \mathbb{P}[f(\mathbf{x}) \in (x_L, M)] \\
        = \ &  \mathbb{E}\left[ d(x_R,  f(\mathbf{x})) \mid f(\mathbf{x}) \in (x_L, M)\right] \cdot p_\ell\\
        = \ &  d(x_R, \pi_\ell) \cdot p_\ell.
    \end{align*}

    Similarly, we have 
    \begin{align*}
         \mathbb{E}\left[\max_{x_i \in \{\mathbf{x}\}} d(x_i, f(\mathbf{x})) \mid f(\mathbf{x}) \in (M, x_R)\right] \cdot  \mathbb{P}[f(\mathbf{x}) \in (M, x_R)] = d(\pi_r, x_L) \cdot p_r,
    \end{align*}
    
    which implies that
    \begin{align*}
        C(f', \mathbf{x})   - C(f, \mathbf{x}) 
        = \ & \mathbb{E}\left[\max_{x_i \in \{x_L, x_R\}} d(x_i, f'(\mathbf{x})) \right] - \mathbb{E}\left[\max_{x_i \in \{x_L, x_R\}} d(x_i, f(\mathbf{x})) \right] \\
        = \ & d(x_R, x_L) \cdot \left(\mathbb{P}[f'(\mathbf{x})  \in \{x_L, x_R\}] - \mathbb{P}[f(\mathbf{x})  \in \{x_L, x_R\}]\right) \\
         \ & \qquad + d(x_R, M) \cdot \left(\mathbb{P}[f'(\mathbf{x})  = M] - \mathbb{P}[f(\mathbf{x}) = M]\right) \\
         \ & \qquad   -  d(x_R, \pi_\ell) \cdot p_\ell - d(x_L, \pi_r) \cdot p_r \\
        = \ & d(x_R, x_L) \cdot (q_\ell p_\ell + q_r p_r) + d(x_R, M) \cdot ((1-q_\ell)p_\ell + (1-q_r)p_r) \\
        \ & \qquad   -  d(x_R, \pi_\ell) \cdot p_\ell - d(x_L, \pi_r) \cdot p_r \\
        = \ & d(x_R, x_L) \cdot (q_\ell p_\ell + q_r p_r) + d(x_R, M) \cdot ((1-q_\ell)p_\ell + (1-q_r)p_r) \\
        \ & - d(x_R, x_L) \cdot q_\ell p_\ell - d(x_R, M) \cdot (1-q_\ell)p_\ell \\ 
        \ & - d(x_L, x_R) \cdot q_r p_r - d(x_L, M) \cdot (1-q_r)p_r \\
        = \ & 0.
         \end{align*}  
    
   \end{proof}

Next, to show that mechanism $f'$ is also truthful in expectation, we first show that  the costs of the agents do not change between $f$ and $f'$.
        \begin{lemma} 
        \label{lem:agentsequalcost}
For all mechanisms $f$ for the line and instances  $\mathbf{x} = \langle x_L, x_R \rangle$ with two agents,  the cost of the agent at location $x_L$ is identical under both $f$ and $f'$, i.e., $ \mathbb{E}\left[d(x_L, f(\mathbf{x}))\right] = \mathbb{E}\left[d(x_L, f'(\mathbf{x}))\right].$ Similarly, we have $ \mathbb{E}\left[d(x_R, f(\mathbf{x}))\right] = \mathbb{E}\left[d(x_R, f'(\mathbf{x}))\right].$
    \end{lemma}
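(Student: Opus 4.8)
The plan is to compute each agent's expected cost under $f'$ by a direct bookkeeping argument, tracking exactly how the construction of $f'$ redistributes the probability mass that $f$ placed in the open intervals $(x_L, M)$ and $(M, x_R)$. The key structural fact I would exploit is that $f'$ differs from $f$ only on these two intervals: mass $p_\ell$ sitting at the conditional mean $\pi_\ell$ is split into mass $q_\ell p_\ell$ sent to $x_L$ and mass $(1-q_\ell)p_\ell$ sent to $M$, and symmetrically mass $p_r$ at $\pi_r$ is split between $x_R$ and $M$. Since the distributions of $f$ and $f'$ agree on $\{x < x_L\} \cup \{x > x_R\}$, those regions contribute identically to every agent's cost and can be ignored.

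The crucial observation is that on the line the distance function $d(x_L, \cdot)$ is \emph{affine} on the interval $[x_L, x_R]$ (it is simply $y \mapsto y - x_L$), and likewise $d(x_R, \cdot)$ is affine there. Therefore, because $\pi_\ell$ was \emph{defined} as the conditional expectation of $f$ restricted to $(x_L, M)$, and because $\pi_\ell = q_\ell x_L + (1-q_\ell)M$ is the barycenter of the two-point distribution to which $f'$ maps that mass, the expected distance from $x_L$ (or from $x_R$) to the relocated mass equals the expected distance to the original mass. Concretely, for the $x_L$-agent I would write
\begin{align*}
\mathbb{E}\!\left[d(x_L, f'(\mathbf{x}))\right] - \mathbb{E}\!\left[d(x_L, f(\mathbf{x}))\right]
&= \bigl(q_\ell p_\ell\, d(x_L, x_L) + (1-q_\ell) p_\ell\, d(x_L, M)\bigr) - p_\ell\, d(x_L, \pi_\ell) \\
&\quad + \bigl(q_r p_r\, d(x_L, x_R) + (1-q_r) p_r\, d(x_L, M)\bigr) - p_r\, d(x_L, \pi_r),
\end{align*}
and then use affineness, namely $d(x_L, \pi_\ell) = q_\ell\, d(x_L, x_L) + (1-q_\ell)\, d(x_L, M)$ and $d(x_L, \pi_r) = q_r\, d(x_L, x_R) + (1-q_r)\, d(x_L, M)$, to see that each of the two bracketed groups cancels, giving $0$. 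The argument for the $x_R$-agent is identical by symmetry, swapping the roles of $x_L$ and $x_R$ and of $q_\ell, p_\ell$ with $q_r, p_r$.

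I do not expect a genuine obstacle here; the lemma is a mass-conservation identity and the only thing one must be careful about is that the splitting coefficients $q_\ell, q_r$ were chosen precisely so that the barycenter of the relocated mass coincides with the conditional mean $\pi_\ell, \pi_r$ of the original mass. Everything hinges on that choice together with the affineness of one-dimensional distance on $[x_L, x_R]$; no maximum appears in these expressions (unlike in Lemma~\ref{lem:costmaintained}), so the computation for the individual agent costs is in fact cleaner than the social-cost computation already carried out. Once this lemma is established, truthfulness in expectation of $f'$ follows immediately: any deviation report by an agent induces some two-agent profile, the construction applies to each such profile, and since individual expected costs are preserved, the incentive constraints inherited from the truthful mechanism $f$ carry over verbatim to $f'$.
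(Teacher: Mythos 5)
Your proposal is correct and follows essentially the same route as the paper's proof: both compute the difference of the two agents' expected costs by tracking how the mass $p_\ell$ at the conditional mean $\pi_\ell$ is split between $x_L$ and $M$ (and $p_r$ between $x_R$ and $M$), and both rely on the fact that $d(x_L,\cdot)$ and $d(x_R,\cdot)$ are affine on $[x_L,x_R]$, so that $d(x_L,\pi_\ell)=(1-q_\ell)\,d(x_L,M)$ and $d(x_L,\pi_r)=q_r\,d(x_L,x_R)+(1-q_r)\,d(x_L,M)$, making each bracketed group cancel. The paper's displayed computation is exactly this cancellation, just written with the convex combinations expanded inline.
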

\begin{proof}
     By definition of $f'$, we have
    \begin{align*}
         & \ \mathbb{E}\left[d(x_L, f(\mathbf{x}))\right] - \mathbb{E}\left[d(x_L, f'(\mathbf{x}))\right] \\
           = &  \  d\left( x_L, \mathbb{E}\left[f(\mathbf{x}) \mid  f(\mathbf{x}) \in (x_L, M) \right] = \pi_\ell \right) \cdot \mathbb{P}\left[f(\mathbf{x}) \in (x_L, M)\right] \\
        & \qquad + d\left( x_L, \mathbb{E}\left[f(\mathbf{x}) \mid  f(\mathbf{x}) \in (M, x_R)\right] = \pi_r \right) \cdot \mathbb{P}\left[f(\mathbf{x}) \in (M, x_R)\right] \\
        & \qquad  + d\left(x_L, M \right) \cdot \left(\mathbb{P}\left[f(\mathbf{x}) = M \right] - \mathbb{P}\left[f'(\mathbf{x}) = M \right]\right) \\
        & \qquad + d\left(x_L, x_R\right) \cdot \left(\mathbb{P}\left[f(\mathbf{x}) = x_R \right] - \mathbb{P}\left[f'(\mathbf{x}) = x_R \right]\right) \\
           = &  \  d(x_L, q_\ell x_L + (1-q_\ell)M) \cdot p_\ell  + d(x_L, q_r x_R + (1-q_r)M) \cdot p_r \\
         & \qquad - d(x_L, M)  ((1-q_\ell)p_\ell + (1-q_r)p_r)  - d(x_L, x_R)  (q_rp_r) \\
                   = &   \  d(x_L, M) (1-q_\ell)p_\ell + d(x_L, x_R)q_r p_r  + d(x_L, M)(1-q_r)p_r   \\
         & \qquad - d(x_L, M)(1-q_\ell) p_\ell - d(x_L, M)(1-q_r)p_r - d(x_L, x_R)q_rp_r \\
            = &  \ 0.
    \end{align*}
\end{proof}
    
Next, we use the previous lemma to show that mechanism $f'$ preserves the truthful in expectation guarantee of $f$.
    \begin{lemma}
    \label{lem:truthfulmaintained}
        If a mechanism $f$ for the line is truthful in expectation over instances with two agents, then $f'$ is also truthful in expectation over instances with two agents.
    \end{lemma}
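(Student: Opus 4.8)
The plan is to prove Lemma~\ref{lem:truthfulmaintained} by leveraging Lemma~\ref{lem:agentsequalcost}, which already establishes that the expected cost of each agent is identical under $f$ and $f'$ \emph{on the particular two-agent instance} $\mathbf{x} = \langle x_L, x_R\rangle$. The crucial subtlety is that truthfulness is not a statement about a single instance: when agent $i$ contemplates a deviation $x_i'$, the mechanism is evaluated on a \emph{different} instance $\langle \mathbf{x}_{-i}, x_i'\rangle$, and the transformation $f \mapsto f'$ must be understood as acting consistently across all such instances. So the first thing I would make explicit is that the construction of $f'$ in the lemma statement is defined instance-by-instance: for \emph{every} pair of reported locations, $f'$ applies the same redistribution of inside-probability mass onto the three points $\{x_L, x_R, M\}$ determined by that pair's own $p_\ell, p_r, q_\ell, q_r$.

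With that in place, the argument is short. Fix an instance with two agents and consider, without loss of generality, the agent at $x_L$ (the case $x_R$ is symmetric). Let the truthful reports be $\langle x_L, x_R\rangle$ and let this agent deviate to some report $x_L'$, producing a new two-agent instance. I would split into the two relevant configurations of the deviated instance, according to whether $x_L'$ falls inside or outside the interval spanned by the remaining agent — in either case the resulting profile is still a two-agent profile to which Lemma~\ref{lem:agentsequalcost} applies. The key chain of inequalities is then
\begin{align*}
    \mathbb{E}\left[d(x_L, f'(x_L, x_R))\right]
    &= \mathbb{E}\left[d(x_L, f(x_L, x_R))\right] \\
    &\leq \mathbb{E}\left[d(x_L, f(x_L', x_R))\right] \\
    &= \mathbb{E}\left[d(x_L, f'(x_L', x_R))\right],
\end{align*}
where the two outer equalities are exactly Lemma~\ref{lem:agentsequalcost} applied to the truthful instance and to the deviated instance respectively, and the middle inequality is the truthfulness-in-expectation of the original mechanism $f$. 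Since this holds for every deviation $x_L'$ and, symmetrically, for the agent at $x_R$, the mechanism $f'$ is truthful in expectation over all two-agent instances.

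The main obstacle I anticipate is not computational but conceptual bookkeeping: I must be careful that Lemma~\ref{lem:agentsequalcost} is genuinely invoked on the deviated profile $\langle x_L', x_R\rangle$ and not only on the original one, which requires that the cost-preservation property of the $f \mapsto f'$ transformation is uniform across instances rather than tied to a single $\langle x_L, x_R\rangle$. The statement of Lemma~\ref{lem:agentsequalcost} is already phrased for arbitrary two-agent instances, so this is legitimate, but I would state it plainly to avoid the trap of thinking the agent's deviation changes which of the two points it ``is.'' A secondary point worth a sentence is that the agent's identity-as-a-location is what matters for measuring cost $d(x_L, \cdot)$: even after deviating, we evaluate the true location $x_L$ against the facility drawn from $f'$ on the misreported profile, and Lemma~\ref{lem:agentsequalcost} controls precisely this expected distance because it equates the full distributions' cost contributions point by point. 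Once these two bookkeeping issues are handled, the proof reduces to the displayed three-line sandwich and the symmetric argument.
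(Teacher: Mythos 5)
There is a genuine gap, and it sits exactly in the step you flagged as a ``secondary point.'' The final equality in your three-line sandwich,
$\mathbb{E}\left[d(x_L, f(x_L', x_R))\right] = \mathbb{E}\left[d(x_L, f'(x_L', x_R))\right]$,
is not an instance of Lemma~\ref{lem:agentsequalcost} and is false in general. That lemma equates expected distances measured from the \emph{reported} locations of the instance on which the mechanisms are run; on the deviated instance $\langle x_L', x_R\rangle$ those are $x_L'$ and $x_R$, not the agent's true location $x_L$. The transformation $f \mapsto f'$ preserves the conditional expectation of the facility within each of $(x_L', M')$ and $(M', x_R)$, and this preserves the expected distance only to points for which $y \mapsto d(z,y)$ is affine on those intervals, i.e., points $z$ lying outside them. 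When the agent deviates left ($x_L' < x_L$), the true location $x_L$ can lie strictly inside $(x_L', M')$; there the distance function has a kink at $x_L$, it is strictly convex as a mixture, and pushing mass to the endpoints of the interval while fixing the mean strictly increases the expected distance. Concretely: if $f(x_L', x_R)$ puts mass $p$ on the midpoint of $(x_L', M')$ and $x_L$ equals that midpoint, then this mass contributes $0$ to $\mathbb{E}\left[d(x_L, f(x_L', x_R))\right]$ but contributes $p \cdot d(x_L', M')/2 > 0$ under $f'$. Also, the case split you propose (whether $x_L'$ lands inside or outside $[x_L, x_R]$) does not address this; the relevant split is whether the \emph{true} location lies inside an interval where mass is redistributed.

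The lemma itself survives because what you actually need is only the inequality
$\mathbb{E}\left[d(x_L, f(x_L', x_R))\right] \leq \mathbb{E}\left[d(x_L, f'(x_L', x_R))\right]$,
and the convexity phenomenon above pushes in exactly the right direction: making the deviation weakly \emph{more} expensive under $f'$ can only help truthfulness. This is the substantive content of the paper's proof, which splits by the direction of the deviation. When $x_L < x_L'$ (and in the order-flipped case $x_R < x_L'$), the true location lies outside every interval where mass is moved, distances are affine there, and your equality is correct --- the paper verifies this by the same expansion used in Lemma~\ref{lem:agentsequalcost}. When $x_L' < x_L$, the paper replaces each interior outcome $y = w x_L' + (1-w) M'$ by the corresponding endpoint lottery and uses $d(x_L, y) < w \cdot d(x_L, x_L') + (1-w) \cdot d(x_L, M')$ to obtain the weak inequality. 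Your proposal collapses both cases into a single equality claim, so as written the argument does not go through; it needs this case analysis and the convexity step to be a proof.
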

    \begin{proof}
        Assume that $f$ is truthful in expectation over instances with two agents.
    Assume that one of the agents deviates and reports a false location in mechanism $f'$. Due to symmetry, and without loss of generality, assume that the agent located at $x_L$ deviates and reports a false location $x'_L$. We need to show
    \begin{equation*}\label{ineq:truthfulness-OnlyM}
        \mathbb{E}\left[d(x_L, f'(x'_L,x_R))\right] \geq \mathbb{E}\left[d(x_L, f'(x_L,x_R))\right].
    \end{equation*}
    Since the original mechanism $f(x_L,x_R)$ is truthful in expectation, we have
    \begin{equation}\label{ineq:truthfulness-f}
        \mathbb{E}\left[d(x_L, f(x'_L,x_R))\right] \geq \mathbb{E}\left[d(x_L, f(x_L,x_R))\right].
    \end{equation}

    Combining these two inequalities with Lemma~\ref{lem:agentsequalcost}, it suffices to show
    \begin{equation*}\label{ineq:DeviationCost-fandOnlyM}
        \mathbb{E}\left[d(x_L, f'(x'_L,x_R))\right] \geq \mathbb{E}\left[d(x_L, f(x'_L,x_R))\right].
    \end{equation*}

    To prove the above inequality, we consider two main cases. First, we focus on the scenario where the agent located at $x_L$ deviates to the right, i.e., $x_L < x'_L$.
    Let $M' = \frac{x'_L + x_R}{2}$. Assume $x'_L \leq x_R$, define $p'_\ell = \mathbb{P}[f(x'_L, x_R) \in (x'_L, M')]$ and $p'_r = \mathbb{P}[f(x'_L, x_R) \in (M', x_R)]$. 
    If $p'_\ell = p'_r = 0$, then $f'(x'_L,x_R) = f(x'_L,x_R)$, and hence $\mathbb{E}\left[d(x_L, f'(x'_L, x_R))\right] = \mathbb{E}\left[d(x_L, f(x'_L, x_R))\right]$. Therefore, the proof is complete in this case.

    Next, consider the distribution where the mechanism $f(x'_L, x_R)$ returns a random facility location within the intervals $(x'_L, M')$ if $p'_\ell > 0$, or within $(M', x_R)$ if $p'_r > 0$. The expected locations within these intervals are $\pi'_\ell \in (x'_L, M')$ and $\pi'_r \in (M', x_R)$, which can be expressed as $\pi'_\ell = q'_\ell x'_L + (1 - q'_\ell)M'$ and $\pi'_r = q'_r x_R + (1 - q'_r)M'$, where $q'_\ell$ and $q'_r$ are the respective convex coefficients.

    In the first case, where $x_L < x'_L$, we show that the cost to the left agent is the same across the two mechanisms. The key idea is that the difference between the two mechanisms, in terms of their returned locations, lies within the intervals $(x'_L, M')$ and $(M', x_R)$, both on the right side of $x_L$. Thus, the agent's cost is fully determined by the expected locations in these intervals. Specifically we have

    \begin{align*}
         & \ \mathbb{E}\left[d(x_L, f(x'_L, x_R))\right] - \mathbb{E}\left[d(x_L, f'(x'_L, x_R))\right] \\
           = &  \  d\left( x_L, \mathbb{E}\left[f(x'_L, x_R) \mid  f(x'_L, x_R) \in (x'_L, M') \right] = \pi'_\ell \right) \cdot \mathbb{P}\left[f(x'_L, x_R) \in (x'_L, M')\right] \\
        & \qquad + d\left( x_L, \mathbb{E}\left[f(x'_L, x_R) \mid  f(x'_L, x_R) \in (M', x_R)\right] = \pi'_r \right) \cdot \mathbb{P}\left[f(x'_L, x_R) \in (M', x_R)\right] \\
        & \qquad  + d\left(x_L, x'_L \right) \cdot \left(\mathbb{P}\left[f(x'_L, x_R) = x'_L \right] - \mathbb{P}\left[f'(x'_L, x_R) = x'_L \right]\right) \\
        & \qquad  + d\left(x_L, M' \right) \cdot \left(\mathbb{P}\left[f(x'_L, x_R) = M' \right] - \mathbb{P}\left[f'(x'_L, x_R) = M' \right]\right) \\
        & \qquad + d\left(x_L, x_R\right) \cdot \left(\mathbb{P}\left[f(x'_L, x_R) = x_R \right] - \mathbb{P}\left[f'(x'_L, x_R) = x_R \right]\right) \\
        = &  \  d(x_L, q'_\ell x'_L + (1-q'_\ell)M') \cdot p'_\ell  + d(x_L, q'_r x_R + (1-q'_r)M') \cdot p'_r \\
         & \qquad - d(x_L, x'_L) q'_\ell p'_\ell - d(x_L, M')  ((1-q'_\ell)p'_\ell + (1-q'_r)p'_r)  - d(x_L, x_R)  q'_rp'_r \\
        = &   \ d(x_L, x'_L) q'_\ell p'_\ell + d(x_L, M') (1-q'_\ell)p'_\ell + d(x_L, x_R)q'_r p'_r  + d(x_L, M')(1-q'_r)p'_r   \\
         & \qquad - d(x_L, x'_L) q'_\ell p'_\ell - d(x_L, M')  ((1-q'_\ell)p'_\ell + (1-q'_r)p'_r)  - d(x_L, x_R)  q'_rp'_r \\
            = &  \ 0.
    \end{align*}
    
    Note that in the case of $x_R < x'_L$, the same arguments hold with a minor change in notation. Define $p'_\ell = \mathbb{P}[f(x'_L, x_R) \in (M', x'_L)]$ and $p'_r = \mathbb{P}[f(x'_L, x_R) \in (x_R, M')]$. Consider the expected locations $\pi'_\ell$ and $\pi'_r$ within the intervals $(M', x'_L)$ and $(x_R, M')$, respectively.

    Focusing our attention on the more interesting case where $x'_L < x_L$, we aim to show that $\mathbb{E}\left[d(x_L, f'(x'_L,x_R))\right] \geq \mathbb{E}\left[d(x_L, f(x'_L,x_R))\right] $.    
    If we focus on the interval $(x'_L, M')$ and assume $x_L \leq M'$, the expected location of the facility returned by the two mechanisms, $f(x'_L,x_R)$ and $f'(x'_L, x_R)$, conditioned on the facility being in the interval $(x'_L, M')$, is the same; this is true by construction (our reduction maintains the expected location $\pi'_{\ell}$ between the reported location and the midpoint). The crucial difference between the two mechanisms is that although $f(x'_L,x_R)$ may return any facility location in the $(x'_L, M')$ interval, the mechanism $f'(x'_L, x_R)$ returns the same expected location using only $x'_L$ and $M'$ in its support. We show that the cost to an agent located at any point $x_L$ in the $(x'_L, M')$ interval is weakly higher in $f'(x'_L, x_R)$ than it is in $f(x'_L,x_R)$, thus maintaining the truthfulness guarantee. Intuitively, this is true because the two mechanisms return the same expected location $\pi'_{\ell}$ in that interval, but $f'(x'_L, x_R)$ only returns the extreme points of the interval, which hurt the agent at $x_L$ the most.

    More formally, we replace any probability assigned to a point in $(x'_L, M')$ with a convex combination between $x'_L$ and $M'$, and each time we do this, we weakly increase the cost to an agent who is located at any point in $(x'_L, M')$. Specifically, for any outcome $y \in (x'_L, M')$ of the mechanism $f(x'_L, x_R)$ and any $x_L \in (x'_L, M')$, if we have $y = w \cdot x'_L + (1-w) \cdot M'$ then $d(x_L, y) < w \cdot d(x_L, x'_L) + (1-w) \cdot d(x_L, M')$.
    
    Since $\pi'_\ell = q'_\ell x'_L + (1-q'_\ell)M'$, at the end of this process, we will end up with $q'_{\ell} p'_{\ell}$ probability increase on $x'_L$, and $(1-q'_{\ell})p'_{\ell}$ probability increase on $M'$. Therefore, we have
    \begin{equation*}
    \begin{aligned}
    \mathbb{E}\left[d(x_L, f(x'_L, x_R))\right] &<  d(x_L, x'_L) q'_\ell p'_\ell + d(x_L, M') \left((1-q'_\ell)p'_\ell + (1-q'_r)p'_r\right) + d(x_L, x_R) q'_r p'_r \\
        &= \mathbb{E}\left[d(x_L, f'(x'_L, x_R))\right].
    \end{aligned}
    \end{equation*}
    
    In the case where $x_L > M'$, similar arguments apply in the interval $(M', x_R)$.
\end{proof}

The proof of Lemma~\ref{lem:reduction} then immediately follows from Lemma~\ref{lem:costmaintained} and Lemma~\ref{lem:truthfulmaintained}.

\subsection{The lower bound for \MOnlyInside \ mechanisms}
\label{sec:mainproof}

Equipped with Lemma~\ref{lem:reduction}, we can now prove Theorem~\ref{thm: Ran-lowerbound-line-predX}.

\begin{proof}[Proof of Theorem~\ref{thm: Ran-lowerbound-line-predX}]
We start by proving the result for two-agent instances on the line using any $\MOnlyInside$ mechanism. Then, using Lemma~\ref{lem:reduction}, the result follows for any randomized mechanism.

First, note that if the chosen facility location $y$ is at distance $d(M, y)$ from the point $M=(x_L+x_R)/2$, then its egalitarian social cost is equal to $C(o(\mathbf{x}), \mathbf{x}) + d(M, y)$. To verify this fact, assume without loss of generality that this location is on the left of $M$ and note that its distance from the agent located at $x_R$ is $d(x_R,M)+d(M, y)$. As a result, the expected social cost of a mechanism $f$ with agent locations $\mathbf{x}$ is
\begin{equation}\label{ineq:social_cost}
   C(f,\mathbf{x}) = C(o(\mathbf{x}), \mathbf{x}) + \mathbb{E}[d(M, f(\mathbf{x}))].
\end{equation}

Now, assume that there exists a mechanism that is $(1+\delta)$-consistent and better than $(2-\delta)$-robust. For robustness, this would imply that for every instance $\mathbf{x}$, irrespective of the prediction, the mechanism must guarantee that
\begin{equation}\label{ineq:robustness}
     \frac{C(f,\mathbf{x})}{C(o(\mathbf{x}), \mathbf{x})} < 2-\delta ~\Rightarrow~  \frac{\mathbb{E}[d(M, f(\mathbf{x}))]}{C(o(\mathbf{x}), \mathbf{x})} < 1-\delta.
\end{equation}

For this mechanism to be truthful in expectation, it must ensure that no agent has an incentive to misreport their location. Consider the instance $\mathbf{x} = \langle x_L, x_R \rangle$, and note that the agent at $x_L$ has the option to misreport their location as $x'_L = x_L - d(x_L, x_R)$, which would shift the new midpoint $M'$ to $x_L$ in the new instance $\mathbf{x'} = \langle x'_L, x_R \rangle$. This deviation would double the optimal cost, i.e., $C(o(\mathbf{x'}), \mathbf{x'}) = 2 \cdot C(o(\mathbf{x}), \mathbf{x})$. 
Inequality~\eqref{ineq:robustness}, then guarantees that the expected cost for this agent after the deviation would be at most $2 (1-\delta) \cdot C(o(\mathbf{x}), \mathbf{x})$ since we have
\begin{equation}\label{ineq:cost_L}
     \frac{\mathbb{E}[d(M', f(\mathbf{x'}))]}{C(o(\mathbf{x'}), \mathbf{x'})} = \frac{\mathbb{E}[d(x_L, f(\mathbf{x'}))]}{2 \cdot C(o(\mathbf{x}), \mathbf{x})} <  1-\delta.
\end{equation}
As a result, to ensure that this agent will not misreport, the mechanism needs to ensure that the expected cost of the agent if they report the truth is strictly less than $2 (1-\delta) \cdot C(o(\mathbf{x}), \mathbf{x})$. If we let $P(\leq L)=\mathbb{P}[f(\mathbf{x})\leq x_L]$ denote the probability that the chosen location is weakly on the left of $x_L$, and $P(M)=\mathbb{P}[f(\mathbf{x})= M]$ denote the probability that the chosen location is $M$, then the expected cost of the agent located at $x_L$ is at least $C(o(\mathbf{x}), \mathbf{x})P(M) + 2 C(o(\mathbf{x}), \mathbf{x})(1-P(M)-P(\leq L))$ because the mechanism is an $\MOnlyInside$ mechanism. This is even if we assume that the only chosen facility locations weakly on the left of $x_L$ (and weakly on the right of $x_R$, respectively) are exactly on $x_L$ (and exactly on $x_R$, respectively). Therefore, the mechanism needs to always satisfy
\begin{equation}\label{ineq:P(<=L)}
\begin{aligned}
     & C(o(\mathbf{x}), \mathbf{x})(P(M)+2(1-P(M)-P(\leq L))) < 2(1-\delta)C(o(\mathbf{x}), \mathbf{x}) \\ 
     & ~~\Rightarrow~~ P(M)+2(1-P(M)-P(\leq L)) < 2(1-\delta) \\ 
     & ~~\Rightarrow~~ P(\leq L) > \delta -\frac{P(M)}{2}.
\end{aligned}
\end{equation}

If we consider the same instance $\mathbf{x}=\langle x_L, x_R\rangle$, and assume that the mechanism is also provided with accurate predictions $\hat{x}_L =x_L$ and $\hat{x}_R =x_R$ regarding the agent locations, to guarantee $1+\delta$ consistency, Inequality \eqref{ineq:social_cost} implies that
\begin{equation}\label{ineq:consistency}
     \frac{C(f(\mathbf{x},\mathbf{\hat{x}=x}), \mathbf{x})}{C(o(\mathbf{x}), \mathbf{x})} \leq 1+\delta ~\Rightarrow~  \frac{\mathbb{E}[d(M, f(\mathbf{x},\mathbf{\hat{x}}))]}{C(o(\mathbf{x}), \mathbf{x})} \leq \delta.
\end{equation}

Finally, if we once again consider the same instance with inaccurate predictions $\hat{x}_L =x_L$ and $\hat{x}_R =x_R + d(x_L, x_R)$, we observe that in this case the agent located at $x_R$ would have the option to instead report $x''_R = x_R + d(x_L, x_R) = \hat{x}_R$, and the predictions would then appear to be accurate, forcing the mechanism to satisfy Inequality~\eqref{ineq:consistency} in order to maintain the required consistency bound. Since the true location $x_R$ would now coincide with the middle point $M''$ of the misreported instance $\mathbf{x''} = \langle x_L, x''_R \rangle$, this would yield the agent located at $x_R$ who misreported an expected cost of $2\delta \cdot C(o(\mathbf{x}), \mathbf{x})$ since we have
\begin{equation}\label{ineq:MisreportingCost}
\frac{\mathbb{E}[d(M'', f(\mathbf{x''},\mathbf{\hat{x}}))]}{C(o(\mathbf{x''}), \mathbf{x''})} = \frac{\mathbb{E}[d(x_R, f(\mathbf{x},\mathbf{\hat{x}}))]}{2 \cdot C(o(\mathbf{x}), \mathbf{x})} \leq \delta.
\end{equation}

As a result, to ensure that this agent will not misreport, the mechanism needs to ensure that the expected cost of the agent if they report the truth is at most $2\delta \cdot C(o(\mathbf{x}), \mathbf{x})$. If we once again let $P(\leq L)$ denote the probability that the chosen location is weakly on the left of $x_L$, and $P(M)$ denote the probability that the chosen location is $M$, then the expected cost of the agent located at $x_R$ is at least $C(o(\mathbf{x}), \mathbf{x})P(M)+2C(o(\mathbf{x}), \mathbf{x})P(\leq L)$. This is even if we once again assume that the only chosen facility locations weakly on the left of $x_L$ (and weakly on the right of $x_R$, respectively) are exactly on $x_L$ (and exactly on $x_R$, respectively). Therefore, the mechanism needs to always satisfy
\begin{equation*}
     P(M)+2P(\leq L) \leq 2\delta ~~\Rightarrow~~ 
     P(\leq L) \leq \delta -\frac{P(M)}{2}.
\end{equation*}
However, this contradicts Inequality~\eqref{ineq:P(<=L)}, so we conclude that no mechanism can simultaneously guarantee $(1+\delta)$ consistency and a robustness better than $(2-\delta)$.
\end{proof}

\subsection{The hardness result for the line is tight}
We observe that the lower bound regarding the robustness and consistency trade-off shown in Theorem~\ref{thm: Ran-lowerbound-line-predX} is actually tight. Specifically, it can be achieved by an appropriate randomization between the optimal deterministic learning-augmented mechanism and the optimal non-learning-augmented randomized mechanism.

\begin{restatable}{proposition}{ThightnessLine} 
    For any $\delta \in [0, 0.5]$, there exists a randomized mechanism on the line that is truthful in expectation, $1 + \delta$-consistent, and $2 - \delta$-robust.
\end{restatable}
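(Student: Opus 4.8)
The plan is to construct the desired mechanism as a fixed convex combination of two building blocks whose guarantees are already established: the deterministic learning-augmented mechanism $f_{\mathrm{det}}$ of \cite{DBLP:conf/sigecom/AgrawalBGOT22}, which is augmented with a prediction $\predF$ of the optimal facility location and is $1$-consistent and $2$-robust, and the non-augmented randomized mechanism $f_{\mathrm{rand}}$ of \cite{DBLP:journals/teco/ProcacciaT13}, which is truthful in expectation and achieves a $3/2$-approximation on every instance (hence is both $3/2$-consistent and $3/2$-robust). Given $\delta \in [0,0.5]$, I would define $f_\delta$ to run $f_{\mathrm{det}}$ with probability $p := 1 - 2\delta$ and $f_{\mathrm{rand}}$ with the remaining probability $1-p = 2\delta$, where this mixing coin is flipped independently of the reported profile. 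Note that $p \in [0,1]$ precisely because $\delta \in [0,0.5]$, and that the resulting mechanism needs only the prediction $\predF$, which is weaker than the full prediction $\predX$ against which the matching lower bound of Theorem~\ref{thm: Ran-lowerbound-line-predX} is proved; this makes the upper and lower bounds meet.

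The first step is to observe that, because the egalitarian cost places the maximum inside the expectation and the mixing coin is independent of everything else, the cost of $f_\delta$ decomposes exactly: conditioning on which component is run gives
\[
C(f_\delta, \mathbf{x}) = p\, C(f_{\mathrm{det}}, \mathbf{x}) + (1-p)\, C(f_{\mathrm{rand}}, \mathbf{x})
\]
for every instance $\mathbf{x}$. Dividing by $C(o(\mathbf{x}), \mathbf{x})$ and using the two components' guarantees then yields, when the prediction is accurate, a consistency ratio of at most $p \cdot 1 + (1-p)\cdot \tfrac{3}{2} = \tfrac{3}{2} - \tfrac{1}{2}p = 1 + \delta$, and, for an arbitrary prediction, a robustness ratio of at most $p\cdot 2 + (1-p)\cdot\tfrac{3}{2} = \tfrac{3}{2} + \tfrac{1}{2}p = 2 - \delta$, where in both lines the last equality substitutes $p = 1 - 2\delta$. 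This gives exactly the claimed bounds.

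The second step is truthfulness in expectation, which I expect to be the only place requiring a careful (if short) argument. The key point is that $p$ is a fixed constant that does not depend on the reported locations, so the expected cost incurred by any agent $i$ under $f_\delta$ is itself the convex combination $p\,\mathbb{E}[d(x_i, f_{\mathrm{det}}(\mathbf{x}))] + (1-p)\,\mathbb{E}[d(x_i, f_{\mathrm{rand}}(\mathbf{x}))]$. Since $f_{\mathrm{det}}$ is deterministically truthful (hence truthful in expectation) for every fixed prediction, and $f_{\mathrm{rand}}$ is truthful in expectation, each of the two expected-cost terms weakly increases under any misreport $x_i'$; taking the same nonnegative convex combination of these two inequalities shows that agent $i$'s expected cost under $f_\delta$ also weakly increases, so no profitable deviation exists. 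This establishes that $f_\delta$ is truthful in expectation, $(1+\delta)$-consistent, and $(2-\delta)$-robust, completing the proof and confirming that Theorem~\ref{thm: Ran-lowerbound-line-predX} tightly characterizes the Pareto frontier. The main subtlety to flag is thus simply ensuring the mixing probability is report-independent, which is what makes both the cost decomposition and the preservation of truthfulness go through cleanly.
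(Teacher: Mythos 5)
Your proposal is correct and is essentially identical to the paper's own proof: the paper also runs the LRM mechanism of \cite{DBLP:journals/teco/ProcacciaT13} with probability $2\delta$ and the MinMaxP mechanism of \cite{DBLP:conf/sigecom/AgrawalBGOT22} with probability $1-2\delta$, noting that the report-independent coin preserves truthfulness. Your write-up is if anything slightly more explicit than the paper's, since it spells out the cost decomposition and the convex combination of the two truthfulness inequalities.
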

\begin{proof}
    We show that the bound of the robustness consistency trade-off in Theorem~\ref{thm: Ran-lowerbound-line-predX} is tight. To verify the tightness for $\delta=0.5$, note that it is possible to achieve a robustness of $1.5$ (and therefore also a consistency of $1.5$) using the following randomized truthful in expectation mechanism of~\citet{DBLP:journals/teco/ProcacciaT13}:

    \begin{mechanism}[LRM]\label{LRM}
    Given input $\mathbf{x}$, return $x_L$ with probability $1/4$, $x_R$ with probability $1/4$, and $M$ with probability $1/2$.
    \end{mechanism} 

    On the other extreme, to verify the tightness of the theorem for $\delta=0$, note that it is possible to achieve a robustness of $2$ with a consistency of $1$ using the following truthful deterministic learning-augmented mechanism of~\citet{DBLP:conf/sigecom/AgrawalBGOT22}:
    
    \begin{mechanism}[MinMaxP]\label{MinMaxP}
    Given $\mathbf{x}$, and a prediction $\predF$ regarding the optimal facility location, return $\predF$ if $\predF \in [x_L, x_R]$, otherwise return $x_L$ if $\predF < x_L$, and return $x_R$ if $\predF > x_R$.
    \end{mechanism}

    In fact, for any $\delta\in [0, 0.5]$, we can achieve a robustness of $2-\delta$ and a consistency of $1+\delta$ by randomly choosing which one of these two mechanisms to run. Specifically, we can achieve these bounds by running the LRM mechanism with probability $2\delta$ and the MinMaxP mechanism with probability $1-2\delta$. Note that since the decision regarding which one of the two truthful mechanisms to run is independent of the agents' reports, the resulting randomized mechanism is truthful as well.
\end{proof}

As a result, the bound of Theorem~\ref{thm: Ran-lowerbound-line-predX} precisely captures the optimal robustness consistency trade-off over all truthful in expectation mechanisms for instances on the line.

\subsection{Other prediction settings}
To gain a more complete picture of different prediction settings, we study an alternative strong prediction that is not strictly stronger than the predicted optimal facility location $\predF$. Specifically, we consider a setting where predictions are available for all pieces of information except for one of the extreme locations. We show that these predictions are not helpful, even without any robustness constraints, to improve the consistency guarantee alone.

\begin{restatable}{theorem}{otherPredictions}\label{thm: otherPredictions}
    Given a prediction set that provides the identities of all $n$ agents, there exist $n-1$ agents such that, even if we have their exact locations, there is no deterministic truthful mechanism on the line that is better than $2$-consistent, and there is no randomized mechanism on the line that is truthful in expectation and better than $1.5$-consistent.
\end{restatable}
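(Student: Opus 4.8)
The plan is to prove both bounds by reducing, via scale-independence, to a single strategic ``free'' agent---the one extreme whose location is \emph{not} revealed by the prediction---and to reuse the egalitarian-cost decomposition $C(f,\mathbf{x}) = C(o(\mathbf{x}),\mathbf{x}) + \mathbb{E}[d(M,f(\mathbf{x}))]$ from Equation~\eqref{ineq:social_cost}, where $M$ is the midpoint of the two extremes and hence the optimal location. I would place the revealed extreme at the origin, so that scale-independence collapses the mechanism on a two-point instance $\langle 0, x_R\rangle$ to $f(0,x_R)=c\,x_R$ for a constant (or, in the randomized case, a fixed scale-invariant distribution). An $(1+\delta)$-consistent mechanism must, on every instance whose $n-1$ predicted coordinates are correct, place the facility within expected distance $\delta\cdot C(o(\mathbf{x}),\mathbf{x})$ of $M$.

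For the deterministic bound I would pin down $c$ by truthfulness of the free agent. If $c\in(0,1)$, the free agent at $x_R$ can misreport the value $x_R/c$, which relocates the returned point exactly onto $x_R$ and gives it cost $0$; hence truthfulness forces either $c=1$ (the facility sits on the free agent) or $c=0$ (it sits on the revealed extreme). Both choices yield egalitarian cost $d(0,x_R)=2\,C(o(\mathbf{x}),\mathbf{x})$, so no deterministic truthful mechanism beats $2$-consistency, matching the prediction-free bound.

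For the randomized bound I would mirror the classical $1.5$ lower bound of \citet{DBLP:journals/teco/ProcacciaT13}. Consistency on the base instance already concentrates the output near $M$, while a misreport by an extreme agent to the instance that \emph{doubles} the optimal cost---the report that relocates the new midpoint onto that agent's own true location---combines with truthfulness in expectation and consistency on the deviated instance to concentrate the output near that extreme as well. Writing both conditions against the single output distribution and applying $d(M,z)+d(x,z)\ge d(M,x)=C(o(\mathbf{x}),\mathbf{x})$ lower-bounds $\delta$. The key technical point is that the scale-invariant distribution cannot be a point mass at $M$: the doubling deviation is exactly what rules this out, and quantifying how much mass must escape toward each extreme drives the constant.

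The main obstacle is that, unlike the prediction-free regime, consistency binds \emph{only} on instances where the $n-1$ revealed coordinates are correct, and there is no robustness to fall back on; so every misreport invoked in the truthfulness step must itself keep the prediction accurate. Only the unrevealed extreme can move while preserving accuracy, and a one-sided deviation by it constrains the output toward just one side of $M$, which on its own is too weak to reach $1.5$. The crux is therefore to choose which $n-1$ agents are revealed---and, where helpful, to add auxiliary revealed agents that anchor the prediction's accuracy on both deviated instances---so that deviations pulling the facility toward \emph{both} sides of $M$ all land on accurate-prediction profiles. I would resolve this with an indistinguishability argument pairing two accurate-prediction instances that induce the same reported profile, forcing one output distribution to be simultaneously close to two points at distance $C(o(\mathbf{x}),\mathbf{x})$ and thereby recovering the full classical $2$ and $1.5$ bounds despite the strong prediction.
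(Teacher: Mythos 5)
Your deterministic argument is correct, and it is essentially the paper's argument repackaged through scale-independence: both proofs hinge on a misreport by the \emph{unrevealed} agent that lands the facility exactly on its true location (the paper shrinks the instance to $\langle 0, y\rangle$ and has that agent report $1$; you scale up to the report $x_R/c$), and in both cases the revealed agents stay put, so the prediction remains accurate on every profile invoked. One small imprecision: truthfulness only forces $c\notin(0,1)$, not $c\in\{0,1\}$, but any $c\notin(0,1)$ already incurs cost at least $2\,C(o(\mathbf{x}),\mathbf{x})$, so the conclusion stands.

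The randomized half has a genuine gap, located exactly where you place ``the crux,'' and your proposed repair cannot close it. By your own design rule (``every misreport invoked in the truthfulness step must itself keep the prediction accurate''), your argument only ever uses (i) consistency on accurate-prediction profiles and (ii) truthfulness of the single unrevealed agent; your indistinguishability pairs are just two true locations $u,v$ of that agent deviating to a common report $r$, and auxiliary revealed agents do not change this. But that constraint system is provably satisfiable at consistency $4/3<1.5$, so no argument confined to it can reach $1.5$. Concretely, consider the scale-invariant rule that, on each accurate profile $\langle 0,r\rangle$ (revealed agents at $0$, unrevealed agent reporting $r>0$), outputs $r/2$ with probability $2/3$ and $r$ with probability $1/3$ (mirrored for $r<0$). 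Its expected maximum cost is $\tfrac23\cdot\tfrac r2+\tfrac13\cdot r=\tfrac{2r}{3}$, i.e.\ consistency exactly $4/3$ on every accurate profile; and the unrevealed agent with true location $u$ who reports $r$ pays $\tfrac23\,|u-r/2|+\tfrac13\,|u-r|$, which is convex in $r$ with slope $-2/3$ for $r<u$, slope $0$ on $(u,2u)$, and slope $2/3$ beyond $2u$, hence is minimized at the truthful report $r=u$. So all constraints of types (i) and (ii) hold simultaneously at $4/3$; in particular the ``doubling'' deviation is exactly cost-neutral rather than profitable, which is why quantifying ``how much mass must escape toward each extreme'' bottoms out at $4/3$, not $1.5$. (This rule is of course not a complete truthful mechanism---the revealed agent at $0$ would deviate---but your proof never invokes the revealed agents' incentives.)

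The missing ingredient is the paper's pigeonhole step, which exploits the existential quantifier ``there exist $n-1$ agents'': the withheld agent is chosen \emph{after} looking at the mechanism. On the profile $\langle 0,1\rangle$, any output distribution satisfies $\mathbb{E}[d(0,f)]+\mathbb{E}[d(1,f)]\ge 1$, so some agent has expected cost at least $1/2$; the paper withholds \emph{that} agent's location and reveals the other's. Then the high-cost agent performs the doubling deviation (report $2$ if it is the right agent, report $-1$ if it is the left one), which keeps the prediction accurate and makes the new reported midpoint coincide with its true location; truthfulness in expectation transfers the $1/2$ lower bound to the deviated profile, whose optimal cost is $1$, yielding cost at least $1.5$ there. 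Your approach instead fixes the withheld agent in advance (the agent at $x_R$, with all revealed agents at the origin), and the rule above evades it precisely by always favoring that fixed agent. So keep your deterministic half, but for the randomized half replace the indistinguishability idea with this mechanism-dependent choice of which agent's location to withhold.
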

\begin{proof}
We first address the deterministic case with two agents whose locations are given by $\mathbf{x} = \langle x_L, x_R \rangle$. Assume we have a prediction $\hat{x}_L$, which predicts the location of the leftmost agent. We will show that even if $\hat{x}_L$ is accurate, there is no truthful mechanism with a consistency better than $2$.

We use the same argument as in Theorem $3.2$ of~\cite{DBLP:journals/teco/ProcacciaT13}. Assume, for contradiction, that $f: \mathbb{R}^2 \rightarrow \mathbb{R}$ is a truthful mechanism with a consistency less than $2$. Consider the location profile $\mathbf{x} = \langle x_L = 0, x_R = 1 \rangle$ and the prediction $\hat{x}_L = 0$. To achieve consistency better than $2$, the mechanism needs to choose a facility location $y = f(\mathbf{x})$ such that $y \in (0, 1)$.

Now consider an alternative location profile $\mathbf{x}' = \langle x'_L = 0, x'_R = y \rangle$ with the prediction $\hat{x}_L = 0$. The optimal facility location for this profile is at $y/2$, yielding a maximum cost of $y/2$. To maintain consistency better than $2$, the mechanism should place the facility within the interval $(0, y)$. However, if this were to occur, the rightmost agent would have an incentive to report a location of $1$ instead of $y$, as this lie would place the facility exactly at $y$, rather than within $(0, y)$. This contradicts the truthfulness of the mechanism.

This argument extends to cases with arbitrary $n$ by situating all other agents at $0$ in both profiles $\mathbf{x}$ and $\mathbf{x}'$, with accurate predictions for them at $0$. Similar reasoning holds true.

For the randomized setting, we use the idea from the proof of Theorem $3.4$ in~\cite{DBLP:journals/teco/ProcacciaT13}. We first focus on the case with two agents and then extend the result to any number of agents. Let $f$ be a randomized truthful in expectation mechanism. Consider the location profile $\mathbf{x} = \langle x_L = 0, x_R = 1 \rangle$. We have $f(\mathbf{x}) = \distP$, where $\distP$ is a probability distribution over $\mathbb{R}$. There exists $x_i \in \mathbf{x}$ such that $\mathbb{E}[d(x_i, \distP)] \geq 1/2$. If this agent's location prediction is unavailable, we can prove that consistency better than $1.5$ cannot be achieved. 

Assume $\mathbb{E}[d(x_R, \distP)] \geq 1/2$ and that we have an accurate prediction $\hat{x}_L = 0$ for the leftmost agent $x_L$. Consider the profile $\mathbf{x}' = \langle x'_L = 0, x'_R = 2 \rangle$, with an accurate prediction $\hat{x}_L = 0$. For truthfulness, the expected distance from the location $1$ should still be $1/2$, otherwise the rightmost agent would lie in profile $\mathbf{x}$. We know that if $\mathbb{E}[d(M, \distP)] = \Delta$, then the expected maximum cost is $\Delta + 1$. With $\Delta \geq 1/2$, the expected cost is at least $1.5$, whereas the optimal cost is $1$, resulting in a consistency of at least $1.5$.

This argument extends to arbitrary $n$ by situating all other agents at $1/2$ in both profiles $\mathbf{x}$ and $\mathbf{x}'$, with accurate predictions for them at $1/2$. Similar reasoning holds true.
\end{proof}

\section{Results for the Plane} \label{Plane}
We now consider instances in the Euclidean plane, with a location profile $\mathbf{x} = \langle x_1, \dots, x_n \rangle$, where $x_i \in \mathbb{R}^2$ for each agent $i$. Missing proofs of this section can be found in Appendix~\ref{apndx:B}.

\subsection{Impossibility Results}

Before considering the robustness and consistency guarantees achievable by randomized mechanisms augmented with different types of predictions, we start off by reducing the gap on what is known for mechanisms without predictions. 

Since the problem of designing good facility location mechanisms in two dimensions is ``harder'' than the one-dimensional case, it may seem counterintuitive at first, but the lower bound that prior work proved for all one-dimensional instances does not extend to two dimensions. The reason is that the design space for two-dimensional mechanisms is much richer, and the known lower bounds apply only to the more restricted class of one-dimensional mechanisms. For example, consider a simple instance with just two agents in two dimensions. For this two-dimensional instance, the mechanism can return a location for the facility that is not on the line containing the two agents' locations, which it, of course, cannot do in one dimension.

Note that, in terms of social cost alone, returning such a location is always Pareto-dominated by returning an appropriate location on the line (e.g., its projection onto the line). However, returning a location that is not on this line also allows the designer to affect the incentives of the participating agents in new and non-trivial ways that are impossible in the one-dimensional case. Specifically, returning points that are not on the line allows us to induce previously impossible cost vectors. For a simple example, in a one-dimensional instance involving two agents at distance $1$ from each other, the only facility location that yields the same cost to both agents is the midpoint between them, leading to a cost vector of $(0.5, 0.5)$. However, in two dimensions, we can induce a cost vector of $(c, c)$ for any $c \geq 0.5$ by simply returning the appropriate point on the interval's perpendicular bisector. 

This additional flexibility could potentially provide the designer with novel ways to ensure that the agents do not misreport their locations; for example, the classic technique of ``money burning,'' used to achieve incentive compatibility without monetary payments, heavily depends on the designer's ability to penalize and reward agents based on their reports. Although we conjecture that truthful mechanisms are always better off returning a facility on the line containing the agents' locations in this instance, proving such a result appears non-trivial. Due to this additional flexibility in two dimensions, proving inapproximability results for this broader class of mechanisms is a more challenging problem and seems to require new techniques and insights.

\begin{restatable}{theorem}{LBwithoutPrediction}
\label{thm: Ran-lowerbound}
    Any randomized mechanism that is truthful in expectation in the Euclidean metric space has an approximation ratio of at least $1.118$.
\end{restatable}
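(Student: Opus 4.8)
The plan is to exhibit, for an arbitrary truthful-in-expectation mechanism $f$, a single two-agent instance on which its cost is at least $\frac{\sqrt5}{2} = 1.118\ldots$ times the optimum, driven by one strategic deviation. I would start from the instance $\mathbf{x} = \langle A, B\rangle$ with $A = (0,0)$ and $B = (1,0)$, whose optimal facility is the midpoint and whose optimal egalitarian cost is $C(o(\mathbf{x}),\mathbf{x}) = 1/2$. Since $d(A,y) + d(B,y) \ge d(A,B) = 1$ for every $y$, taking expectations over $f(\mathbf{x})$ gives $\mathbb{E}[d(A,f(\mathbf{x}))] + \mathbb{E}[d(B,f(\mathbf{x}))] \ge 1$, so at least one agent incurs expected cost at least $1/2$; by the left–right symmetry of the construction I assume it is the agent at $B$ (otherwise mirror everything across $x = 1/2$).

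Next I would use truthfulness to transfer this constraint onto a low-optimum instance. Because the agent truly located at $B$ cannot gain by misreporting, reporting $B' = (2,0)$ can only weakly increase its cost, so $\mathbb{E}[d(B, f(\langle A, B'\rangle))] \ge \mathbb{E}[d(B, f(\mathbf{x}))] \ge 1/2$. The derived instance $\langle A, B'\rangle = \langle (0,0),(2,0)\rangle$ has its smallest enclosing circle centered exactly at $B=(1,0)$ with radius $1$, so its optimal cost is $1$ and, crucially, the inherited constraint says that $f$ places the facility at expected distance at least $1/2$ from the optimal center. Writing $g(y) = \max(d(A,y), d(B',y))$ for the egalitarian cost of a facility at $y$ on this instance, it then remains to prove that every distribution $\distP$ with $\mathbb{E}_{y\sim\distP}[d(B,y)] \ge 1/2$ satisfies $\mathbb{E}_{y\sim\distP}[g(y)] \ge \frac{\sqrt5}{2}$, which yields the ratio $\frac{\sqrt5}{2}/1 = \frac{\sqrt5}{2}$.

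To obtain this expectation bound I would establish the single pointwise inequality $g(y) \ge \frac{\sqrt5}{2} + \frac{1}{\sqrt5}\bigl(d(B,y) - \tfrac12\bigr)$ for all $y \in \mathbb{R}^2$ and integrate it against $\distP$, using $\mathbb{E}[d(B,y)] \ge 1/2$ together with the nonnegative coefficient $1/\sqrt5$. The multiplier $1/\sqrt5$ and the tight witness are read off from the perpendicular point $y^\ast = (1, 1/2)$ directly above the center, where $g(y^\ast) = \sqrt{1 + 1/4} = \frac{\sqrt5}{2}$ and $d(B, y^\ast) = 1/2$; this is precisely the outcome a mechanism would use to meet the constraint as cheaply as possible, and it is the $1$–$2$–$\sqrt5$ right-triangle geometry that produces the constant. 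This perpendicular escape is also exactly why the one-dimensional lower bound of $1.5$ does not survive in the plane: off the line a facility can buy distance from $B$ while increasing the maximum distance to $A$ and $B'$ far more slowly.

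The main obstacle is verifying the pointwise inequality globally. By the reflection symmetry $u \mapsto 2-u$ (which swaps $A$ and $B'$ and fixes $B$) it suffices to treat the half-plane $u \ge 1$, where $g(y) = d(A,y)$; there the inequality rearranges to $\sqrt5\,\lVert y\rVert \ge 2 + \lVert y - B\rVert$, and squaring twice reduces it to the polynomial inequality $(4u^2 + 2u - 5 + 4v^2)^2 \ge 16\bigl((u-1)^2 + v^2\bigr)$ for $u \ge 1$, which holds with equality exactly at $y^\ast = (1,1/2)$. I would finish by confirming this degree-four inequality (for instance via a sum-of-squares factorization, after checking the sign conditions that make the squarings reversible), and then assemble the three ingredients — the triangle-inequality step, the truthfulness transfer, and the expectation bound — to conclude $C(f, \langle A,B'\rangle) \ge \frac{\sqrt5}{2}\,C(o(\langle A,B'\rangle), \langle A,B'\rangle)$, establishing the claimed $1.118$ lower bound for every truthful-in-expectation mechanism.
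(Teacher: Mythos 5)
Your proposal is correct and takes essentially the same route as the paper's proof: the same pair of instances $\langle (0,0),(1,0)\rangle$ and $\langle (0,0),(2,0)\rangle$, the same truthfulness transfer of the $1/2$ expected-distance constraint onto the center of the second instance's enclosing circle, and the same $\sqrt{5}/2$ geometry, where your tangent-line inequality at $d(B,y)=1/2$ is just the supporting-line form of the paper's convexity (Jensen) step applied to the pointwise bound $g(y)\geq\sqrt{1+d(B,y)^2}$. The only cosmetic difference is that the paper pads the instance with $n-2$ agents at $(1/2,0)$ so the argument covers every number of agents, which your two-agent construction handles implicitly since the theorem quantifies over all instances.
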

\begin{proof}

    Let $f(\mathbf{x}) = \distP$ be a randomized truthful in expectation mechanism, where $\distP$ is a probability distribution over $\mathbb{R}^2$. 
    Consider the location profile $\mathbf{x} = \langle x_{1} = (0, 0), x_2 = \cdots = x_{n-1} = (1/2, 0), x_{n} = (1, 0) \rangle$. There exists an $x_i$ (either $x_1$ or $x_n$) such that $d(x_i, f(\mathbf{x})) = \mathbb{E}_{y \sim \distP} d(x_i, y) \geq 1/2$. Without loss of generality, assume that $d(x_n, f(\mathbf{x})) \geq 1/2$. Now consider the location profile $\mathbf{x}' = \langle x'_1 = (0, 0), x'_2 = \cdots = x'_{n-1} = (1/2, 0), x'_n = (2, 0) \rangle$. To maintain truthfulness, we must have $d(x_n = (1, 0), f(\mathbf{x'})) \geq 1/2$, preventing the agent at location $x_n = (1, 0)$ in the profile $\mathbf{x}$ from having an incentive to lie about being at location $x'_n = (2, 0)$.

    Extending the result of Theorem 3.4 in~\cite{DBLP:journals/teco/ProcacciaT13} from the line to the Euclidean metric, if we have $d(o(\mathbf{x}), f(\mathbf{x})) \geq \Delta$, then the expected maximum cost is at least $$\sqrt{\Delta^2 + C(o(\mathbf{x}), \mathbf{x})^2}.$$ Therefore, the expected cost of $\mathbf{x}'$ is at least $\sqrt{\frac{1}{4} + \left(\frac{2}{2}\right)^2}$, resulting in a $\sqrt{1.25} \approx 1.118$ approximation, as the optimum cost is $\frac{d(x'_1, x'_n)}{2} = 1$.
    \begin{figure}
    \centering
    \includegraphics[width=0.8\linewidth]{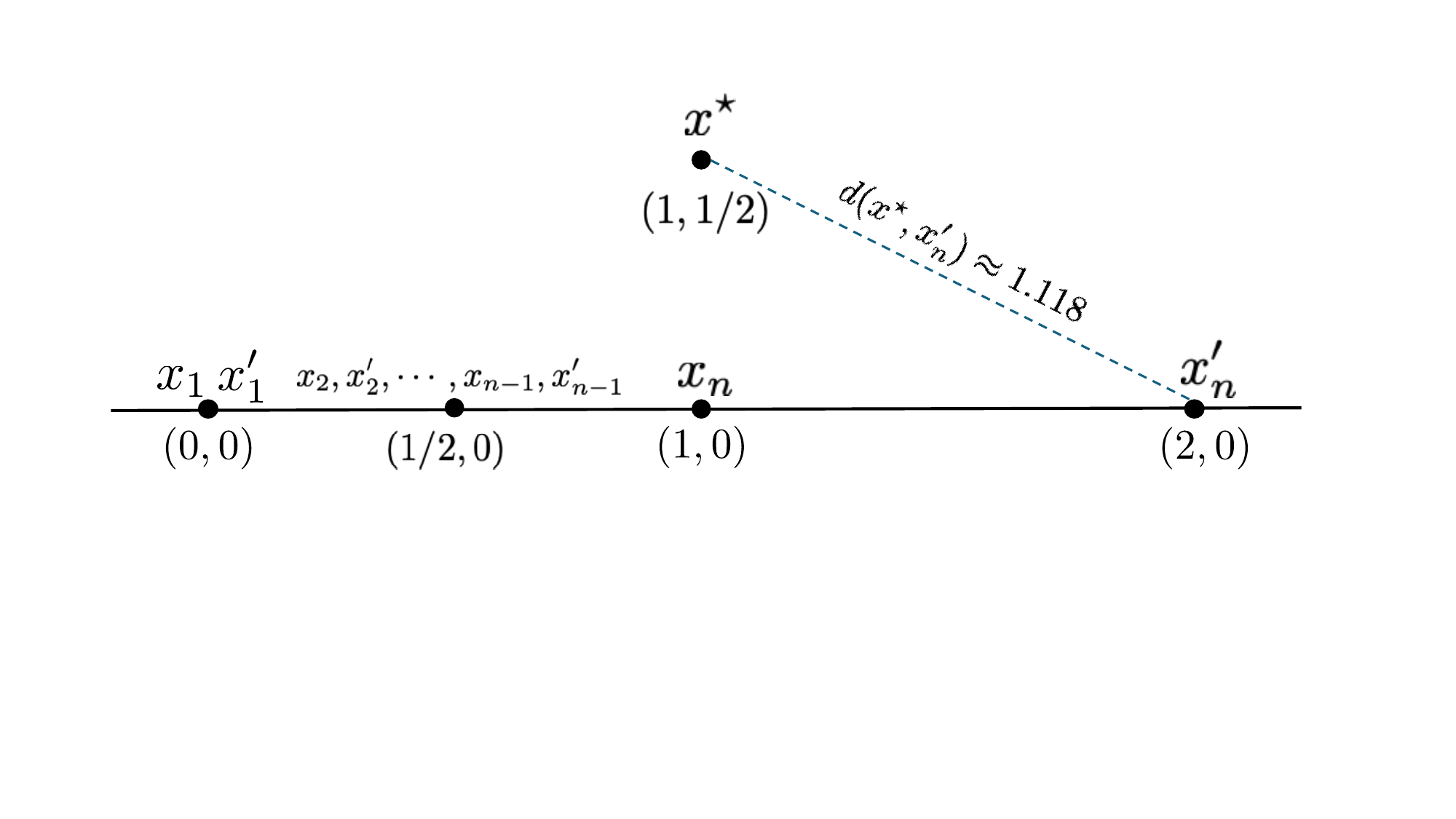}
    \caption{Instances $\mathbf{x} = \langle x_{1} = (0, 0), x_2 = \cdots = x_{n-1} = (1/2, 0), x_{n} = (1, 0) \rangle$, and $\mathbf{x}' = \langle x'_1 = (0, 0), x'_2 = \cdots = x'_{n-1} = (1/2, 0), x'_n = (2, 0) \rangle$ in the proof of Theorem $2$. It is assumed that $d(x_n, f(\mathbf{x})) \geq 1/2$ and $ x^{\star} \in \arg\min_{y: d(x_n, y) \geq 1/2}  C(y, \mathbf{x})$}.
    \label{fig:lowerbound}
\end{figure}
\end{proof}

Our following two results provide impossibility results for both deterministic and randomized mechanisms augmented with perfect predictions. For deterministic mechanisms, \citet{DBLP:conf/sigecom/AgrawalBGOT22} showed that there is a mechanism that, given a prediction about the optimal facility location, is $1$-consistent and $(1 + \sqrt{2})$-robust.  Our impossibility result for deterministic mechanisms shows that stronger predictions do not help: even with predictions about the location of each agent, there is no deterministic mechanism that achieves a robustness better than $1 + \sqrt{2}$ and a consistency that improves over the best approximation achievable without predictions.

\begin{restatable}{theorem}{LBdeteministicFullPred}\label{thm: LB-Det-2D-fullPredic}
    For any $\delta > 0$, there is no deterministic truthful mechanism that is $(2-\delta)$-consistent and $(1+\sqrt{2}-\delta)$-robust, even if it is provided with full predictions $\predX$ (the location of each agent).
\end{restatable}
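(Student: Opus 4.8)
The plan is to argue by contradiction: assume a deterministic truthful mechanism $f(\cdot,\predX)$ is simultaneously $(2-\delta)$-consistent and $(1+\sqrt{2}-\delta)$-robust for some fixed $\delta>0$, and produce a two-agent instance on which the egalitarian cost exceeds $(1+\sqrt{2}-\delta)\,C(o(\mathbf{x}),\mathbf{x})$ while the prediction is inaccurate there, so that only robustness applies and is violated. As in the proofs of Theorems~\ref{thm: Ran-lowerbound-line-predX} and~\ref{thm: Ran-lowerbound}, exactly two forces are available: first, consistency, which on any instance whose reports equal $\predX$ confines the facility to the lens $\{y:\max_i d(x_i,y)\le (2-\delta)\,C(o(\mathbf{x}),\mathbf{x})\}$ around the optimal center; and second, truthfulness, which across two instances differing in a single agent's report (under the \emph{same} $\predX$) relates the facility chosen on one to the facility chosen on the other. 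Using anonymity, any extra agents can be collapsed onto the two extreme reported locations without changing the optimal enclosing circle, so it suffices to reason about two reported points.

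The backbone of the argument is the displacement device already established in the proof of Theorem~\ref{thm: Ran-lowerbound}: if an agent's truthful cost on a small instance is bounded below, a deviation that doubles the optimal cost transfers this bound to the deviated instance $\mathbf{x}'$ as a guaranteed displacement $\Delta=d(o(\mathbf{x}'),f(\mathbf{x}'))$ of the facility from the new center (the deviating agent's original location becomes that center), whereupon the maximum cost on $\mathbf{x}'$ is at least $\sqrt{\Delta^2+C(o(\mathbf{x}'),\mathbf{x}')^2}$. The new ingredient is the full prediction $\predX$. I would pair the small instance with a deliberately inaccurate $\predX$ whose \emph{accurate scenario} is reached by the deviating agent, with the false optimum of $\predX$ placed at $45^\circ$ to the base line. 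The aim is for $(2-\delta)$-consistency on that scenario — rather than the mere pigeonhole bound $C(o,\mathbf{x})/2$ used in Theorem~\ref{thm: Ran-lowerbound} — to both enlarge the forced displacement toward $\sqrt{2}\,C(o(\mathbf{x}'),\mathbf{x}')$ and fix its direction along the agents' line. Controlling the direction is what upgrades the weak accounting $\sqrt{\Delta^2+C(o,\mathbf{x})^2}$ to the collinear sum $\Delta+C(o,\mathbf{x})$, so that $\Delta=\sqrt{2}\,C(o,\mathbf{x})$ yields precisely the $(1+\sqrt{2})$ factor, with the $\sqrt{2}$ coming from the right-angle geometry of the diagonal doubling.

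The hard part, and the reason the one-dimensional arguments do not transfer, is the extra Euclidean freedom discussed before Theorem~\ref{thm: Ran-lowerbound}: a single deviation constrains the facility in only one direction, and the mechanism can try to evade any lone constraint by moving the facility off the agents' line (the very freedom that produces the symmetric cost vectors $(c,c)$ noted there). Two difficulties must be resolved. First, a single doubling deviation forces displacement of magnitude only $\tfrac{1}{2}\,C(o,\mathbf{x})$ and in an uncontrolled direction; to reach the $\sqrt{2}$ regime I expect to need a coordinated \emph{pair} of deviations — one supplying a consistency-pinned facility near a false, diagonally displaced optimum, and one doubling deviation transferring a lower bound — whose constraints jointly force the facility \emph{beyond} the extreme agent rather than merely near it (merely-near only gives a factor of $2$). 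Second, the off-line escape must be closed, for which I would exploit anonymity together with a reflection of the entire construction across the base line to force the facility onto the line by symmetry, and I must check that the consistency slack enters only additively as an $O(\delta)$ term, so that the forced cost degrades continuously to $(1+\sqrt{2}-\delta)\,C(o,\mathbf{x})$ and the contradiction holds for every $\delta>0$. Pinning the exact constant $1+\sqrt{2}$, as opposed to the weaker $\sqrt{3}$ or $1.118$ that naive perpendicular accounting yields, is the crux, and it is precisely where the diagonal offset built into $\predX$ and the deviation length must be tuned.
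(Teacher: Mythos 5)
Your proposal is not a completion of the proof but a plan whose hardest steps are explicitly left open (``I expect to need a coordinated pair of deviations,'' ``must be tuned,'' ``the crux''), and the two concrete devices you do commit to both fail. First, the symmetry step: anonymity is invariance under permutations of agent \emph{identities}, not under isometries of the plane, so reflecting your construction across the base line yields no constraint --- a deterministic truthful mechanism need not be reflection-equivariant (coordinatewise-median mechanisms with phantom points off the line are exactly such examples), and hence you cannot ``force the facility onto the line by symmetry.'' Second, the pinning step: $(2-\delta)$-consistency does not pin the facility near the false optimum of $\predX$; it only confines it to a lens of radius $\frac{2-\delta}{2}\,d(a,b')$ around the predicted configuration, which is far too weak to extract a displacement of magnitude $\sqrt{2}\,C(o(\mathbf{x}'),\mathbf{x}')$ in a controlled direction. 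The displacement device of Theorem~\ref{thm: Ran-lowerbound} transfers a lower bound on a truthful agent's cost into a lower bound on $d(o(\mathbf{x}'),f(\mathbf{x}'))$, but here no such strong lower bound is available to transfer, and single-agent deviations can only link instances that are close to one another, whereas the tension between consistency and robustness in this theorem is realized on two very different instance families.

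The paper's proof is structurally different and sidesteps every obstacle you identify: it invokes the characterization of \cite{peters1993range}, by which every deterministic, truthful, anonymous, unanimous mechanism on the plane is a Generalized Coordinatewise Median (GCM) with $n-1$ fixed phantom points. For a fixed prediction the mechanism is such a GCM; a vertical instance (agents at $(0,20)$ and $(0,10)$, with accurate predictions) shows that consistency better than $2$ forces phantom points to have $y$-coordinates above $10$; then a second, diagonally arranged instance with optimal cost $1$ exploits the axis-aligned nature of the coordinatewise median: the output is forced to $(\tilde{x},1)$ while one agent sits at $(\tilde{x}-1-\tfrac{1}{\sqrt{2}},-\tfrac{1}{\sqrt{2}})$, at distance exactly $1+\sqrt{2}$ from the output. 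The essential idea you are missing is that the two instances are linked through the mechanism's internal phantom structure --- which consistency constrains once and for all --- rather than through any chain of deviations; this is precisely what tames the ``extra Euclidean freedom'' that you correctly identify as the reason direct one-dimensional-style arguments do not transfer, and it is also where the constant $1+\sqrt{2}$ genuinely comes from (the $45^{\circ}$ geometry interacting with an axis-aligned median, not a tuned prediction offset).
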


The next result shows that there is no hope of achieving the best of both worlds in the randomized setting; to obtain $1$ consistency, we would need to sacrifice robustness.

\begin{restatable}{theorem}{LBRanFullPred}\label{thm: LB-Ran-2D-fullPredic}
    For any $\delta > 0$, there is no randomized mechanism that is truthful in expectation, $1$-consistent, and $(2 - \delta)$-robust, even for two-agent instances and even if it is provided with full predictions $\predX$ (the location of each agent). This is tight, i.e., there exists a randomized mechanism that is truthful in expectation, $1$-consistent, and $2$-robust for two-agent instances.
\end{restatable}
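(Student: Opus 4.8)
My plan is to prove the two directions separately, handling the impossibility first and then exhibiting a matching mechanism.

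For the lower bound I would exploit the fact that on two-agent instances $1$-consistency pins the mechanism down completely on accurate inputs. For any reported pair $\langle x_1, x_2\rangle$ the egalitarian cost $\max(d(x_1,y),d(x_2,y))$ is \emph{uniquely} minimized at the midpoint $m=(x_1+x_2)/2$: the two disks of radius $d(x_1,x_2)/2$ around $x_1$ and $x_2$ are tangent and meet only at $m$. Hence $1$-consistency forces $f(\mathbf{x},\predX)$ to be the point mass at $m$ whenever the reports coincide with an accurate prediction $\predX=\mathbf{x}$. I would then engineer one inaccurate-prediction instance on which truthfulness drags the facility all the way to an endpoint. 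Take true locations $\mathbf{x}=\langle(0,0),(1,0)\rangle$ but prediction $\predX=\langle(0,0),(2,0)\rangle$. If the right agent (truly at $(1,0)$) misreports $(2,0)$, the reported profile equals the prediction, so by the consistency-forced behavior the mechanism outputs the midpoint $(1,0)$, giving this agent cost $0$. Truthfulness in expectation then forces the honest expected cost of the right agent to be $0$ as well, i.e. $f(\langle(0,0),(1,0)\rangle,\predX)=(1,0)$ with probability $1$. On this instance the egalitarian cost is $d((0,0),(1,0))=1$ while the optimum is $1/2$, a ratio of exactly $2$. Since this prediction is inaccurate, only robustness binds, so no mechanism can be both $1$-consistent and $(2-\delta)$-robust for any $\delta>0$.

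The two delicate points here are (i) that $1$-consistency yields a point mass rather than merely the correct expected cost, which uses the uniqueness of the min--max minimizer in the plane, and (ii) applying consistency to the \emph{deviated} report profile: the mechanism cannot distinguish the honest instance $\langle(0,0),(2,0)\rangle$ (with its accurate prediction) from the misreport, so it must behave identically, and this is exactly what lets truthfulness transfer the zero cost back to the honest instance. I expect this to be the easier half. Note that the argument works verbatim on the line, and the extra freedom of placing the facility off the agents' line does not rescue the mechanism, because truthfulness forces the agent's cost to be exactly $0$, which pins the facility to the unique point $(1,0)$ regardless of dimension.

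For tightness I would use the coordinate-wise median with a single phantom at the predicted optimum. Given reports $\langle x_1,x_2\rangle$ and a full prediction $\predX$ (equivalently its induced optimum $\hat o=(\hat x_1+\hat x_2)/2$), output $y$ with $y_k=\operatorname{med}(x_{1,k},x_{2,k},\hat o_k)$ in each coordinate $k\in\{1,2\}$. Consistency is immediate: an accurate prediction gives $\hat o=m$, and since $m_k$ lies between $x_{1,k}$ and $x_{2,k}$ the median equals $m_k$, so $y=m$ is optimal. Robustness follows because each $y_k$ lies in $[\min(x_{1,k},x_{2,k}),\max(x_{1,k},x_{2,k})]$, so $y$ sits in the axis-aligned bounding box having $x_1,x_2$ as opposite corners; the farthest box point from a corner is the diagonally opposite corner, hence $d(x_i,y)\le d(x_1,x_2)=2\,C(o(\mathbf{x}),\mathbf{x})$. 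The main obstacle is truthfulness, and here the $L_2$ (rather than $L_1$) cost is what makes it subtle: the naive ``project the predicted optimum onto the segment or enclosing disk of the reports'' mechanism is \emph{not} truthful, since an agent can tilt the segment to pull the facility toward its true location. The coordinate-wise median avoids this because each output coordinate depends monotonically only on the corresponding report coordinate, so the set of outcomes an agent can induce by varying its report is an axis-aligned rectangle, and the median clamps each true coordinate to its feasible interval. As Euclidean distance to a fixed point over a rectangle is minimized at the coordinate-wise clamped point, honest reporting is simultaneously a best response in both coordinates, giving truthfulness in expectation; I would write this final step as the crux of the upper bound.
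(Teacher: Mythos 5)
Your proposal is correct and follows essentially the same route as the paper: the impossibility uses the identical construction (consistency on the accurate-prediction instance $\langle(0,0),(2,0)\rangle$ forces a point mass at the midpoint, and truthfulness then forces the facility onto the right agent in the true instance $\langle(0,0),(1,0)\rangle$ under the inaccurate prediction, yielding ratio $2$), and your coordinate-wise median with a phantom at $\hat o$ is exactly the paper's Minimum Bounding Box mechanism, i.e., MinMaxP applied per coordinate with $F^*=\hat o$. The only difference is that you prove the point-mass step and the truthfulness of the coordinate-wise median from first principles (via the clamping-onto-a-rectangle argument), where the paper cites \citet{DBLP:conf/sigecom/AgrawalBGOT22} for those properties and only verifies the $2$-robustness observation directly.
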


\subsection{Positive Results Using Extreme ID Prediction}
We now turn to positive results, and provide a learning-augmented randomized mechanism that is provided with a new type of prediction: the prediction does not provide us with any actual location, but instead only provides us with the identities $\predE = \langle e_1, \cdots, e_k \rangle$ of the $k$ agents who would suffer the maximum cost in the optimal solution (we refer to them as ``extreme agents''), i.e., $\{e_1, \ldots, e_k\} = \arg\max_{e_i \in [n]} d(x_{e_i}, o(\mathbf{x}))$. Note that the smallest circle that encloses all the points is the \emph{circumcircle} of the locations of the extreme agents. Therefore, the center of this circle is $o(\mathbf{x})$, the optimum solution.

We propose a mechanism leveraging predictions derived from IDs of extreme agents. The main idea is to return the centroid of the extreme agents, denoted by $\G$ with a probability of half, and each of the extreme points with a probability of $1/2k$ to prevent misreporting incentives. 

\begin{nscenter}
\begin{algorithm}[!htb]
    \caption{Centroid Mechanism on Extreme Agents}\label{2D-ID}
    
    \SetKwInOut{Input}{Input}
    \SetKwInOut{Output}{Output}
    
    \Input{Location profile $\mathbf{x}= \langle x_1, \cdots, x_n \rangle$, Predictions $\predE = \langle e_1, \cdots, e_k \rangle$ }
        
    \Output{Probability distribution $\distP$ on location of the facility}

    \medskip
    With probability $1/2$:\\
        return the centroid $\G = \frac{ x_{e_1} + \cdots+ x_{e_k}}{k}$
        
    With probability $1/2k$:\\
        return each point $x_{e_1}, \cdots, x_{e_k}$
\end{algorithm}

\end{nscenter}

\citet{DBLP:conf/sigecom/TangYZ20} run this mechanism over all agents (not only the extreme ones) and achieve a $2 - 1/n $ approximation. We improve the approximation factor (in case of having good predictions) to $2 - 1/3 \approx 1.67$  by running this mechanism only on extreme agents. We use their ideas to maintain truthfulness and we use new techniques to show the approximation guarantees for an arbitrarily number of agents. Moreover, as long as any returned location falls within the minimum enclosing circle of the agents predicted to be extreme (which is the case for Mechanism~\ref{2D-ID}), this ensures an approximation factor of $2$ in case of having bad predictions.

First, we argue that $k \geq 2$, meaning that there are at least two extreme agents on the minimum enclosing circle. Otherwise, one can find a smaller circle containing all the points. As a warm-up, we first consider the $k=2$ case and propose a randomized mechanism that is truthful in expectation and achieves $1.5$ consistency and $2$ robustness for any number of agents. 

\begin{restatable}{theorem}{TwoExtreme}\label{thm:2D-2ExtS} 
    Assume there are only two extreme agents, i.e., $k = 2.$ Then, given predictions $\predE = \langle e_1, e_2 \rangle$ that provides the IDs of the only two extreme agents, there exists a randomized mechanism that is truthful in expectation and achieves $1.5$ consistency and $2$ robustness for any number of agents.
\end{restatable}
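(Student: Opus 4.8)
The plan is to analyze Mechanism~\ref{2D-ID} specialized to $k=2$: it returns the centroid $\G = (x_{e_1}+x_{e_2})/2$ — i.e. the midpoint $M$ of the two predicted extreme agents — with probability $1/2$, and each of $x_{e_1}$ and $x_{e_2}$ with probability $1/4$. First I would dispatch robustness, the easiest of the three guarantees. Both $x_{e_1}$ and $x_{e_2}$ are genuine agent locations, so they lie in the minimum enclosing disk of $\mathbf{x}$, and since a disk is convex their midpoint $M$ lies in it as well. Writing $r^\star = C(o(\mathbf{x}),\mathbf{x})$ for the optimal cost (the radius of that disk) and $o = o(\mathbf{x})$ for its center, every point $y$ in the disk satisfies $d(x_i,y) \le d(x_i,o) + d(o,y) \le 2r^\star$ for all agents $i$ by the triangle inequality. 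Hence every point in the support of the mechanism incurs maximum cost at most $2r^\star$, giving $2$ robustness no matter how bad the prediction is.

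Next I would establish consistency. When the prediction is accurate the two extreme agents form a diameter of the minimum enclosing circle, so $o(\mathbf{x}) = M$ and $r^\star = D/2$ with $D = d(x_{e_1}, x_{e_2})$, and all remaining agents lie inside this circle (this is where ``for any number of agents'' is harmless: the other $n-2$ agents only sit strictly inside). The facility at $M$ then incurs maximum cost $D/2$, while the facility at $x_{e_1}$ (resp. $x_{e_2}$) incurs maximum cost at most $D$, the diameter. Taking the expectation over the three outcomes gives
\[
C(f,\mathbf{x}) \le \tfrac12\cdot\tfrac{D}{2} + \tfrac14 D + \tfrac14 D = \tfrac{3D}{4} = 1.5\,r^\star,
\]
which is the claimed $1.5$ consistency (and is in fact tight).

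The main obstacle is truthfulness in expectation, and here I would adapt the incentive argument of~\cite{DBLP:conf/sigecom/TangYZ20}. Since the outcome depends only on the reports of $e_1$ and $e_2$, every other agent is indifferent to its own report and cannot gain, so I only need to handle the two extreme agents. Consider $e_1$ with true location $a$, fix the other report $b$, and consider a deviation to $a'$; let $O = (a+b)/2$ be the truthful midpoint and $M' = (a'+b)/2$ the reported one. The crucial identity is $d(a,a') = 2\,d(O,M')$, so the agent's expected cost becomes
\[
\tfrac14 d(a,a') + \tfrac14 d(a,b) + \tfrac12 d(a,M') = \tfrac14 d(a,b) + \tfrac12\bigl(d(O,M') + d(a,M')\bigr).
\]
By the triangle inequality $d(O,M') + d(a,M') \ge d(a,O) = \tfrac12 d(a,b)$, so the expected cost is at least $\tfrac12 d(a,b)$, and truthful reporting (which yields $M' = O$) attains this minimum. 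Thus no extreme agent can strictly benefit from misreporting.

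The one subtlety I would verify carefully is that the reformulation $d(a,a')=2\,d(O,M')$ holds verbatim in the plane and not merely on a line. This is a genuine vector identity: since $a' = 2M' - b$ we get $a - a' = (a+b) - 2M' = 2(O - M')$, whence $\|a-a'\| = 2\|O-M'\|$. Because the remaining estimates use only the triangle inequality, nothing about the argument is special to one dimension, and the three guarantees follow. The robustness and consistency computations are essentially routine once the geometric setup is in place; the only place requiring care is ensuring the incentive identity and its triangle-inequality consequence are stated for arbitrary points of $\mathbb{R}^2$.
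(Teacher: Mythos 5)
Your proof is correct and follows essentially the same route as the paper's: analyze Mechanism~\ref{2D-ID} with $k=2$ (i.e., LRM run on the segment $x_{e_1}x_{e_2}$), obtain $1.5$ consistency from the fact that two extreme agents must span a diameter, $2$ robustness from all outcomes lying in the minimum enclosing disk, and truthfulness from an LRM-style incentive argument restricted to the two relevant agents. The only difference is that where the paper dismisses truthfulness with ``reasons similar to those in Mechanism~\ref{LRM},'' you make the planar argument explicit via the vector identity $d(a,a')=2\,d(O,M')$ and the triangle inequality --- a useful elaboration, since LRM as stated is a one-dimensional mechanism, but not a different approach.
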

\begin{proof}
    If we only have two agents $x_{e_1}$ and $x_{e_2}$ located on the minimum enclosing circle, $x_{e_1} x_{e_2}$ represent a diameter of this circle; otherwise, we can find a smaller circle containing all the points. Therefore, the middle point of these two locations, $ \frac{x_{e_1} + x_{e_2}}{2}$, is the optimum location of the facility.
    Mechanism \ref{2D-ID} runs LRM mechanism on the diameter of the minimum enclosing circle, meaning that it returns the optimum location with a probability of $1/2$ and each of the reported locations of the extreme agents with a probability of $1/4$, resulting in a $1.5$ consistency.

    In terms of truthfulness, since no other agent besides $x_{e_1}$ and $x_{e_2}$ impacts the mechanism, it suffices to show that they do not have an incentive to lie. Agents $x_{e_1}$ and $x_{e_2}$ lack such incentive for reasons similar to those in Mechanism \ref{LRM}. Additionally, as any returned location falls within the minimum enclosing circle, it ensures a robustness factor of $2$.
\end{proof}

We then show that it is sufficient to only consider the case where we have exactly three extreme agents on the minimum enclosing circle, meaning $k=3$. If we have more than three agents located on the minimum enclosing circle, a continuous perturbation of the points makes the probability of having at least four points lying on a circle infinitesimally small \citep{deBerg2008}. For $k=3$, we use the properties of the Euler line to prove the $1.67$ consistency.

\begin{theorem}\label{thm:2D-3ExtS}
    Given a prediction set that provides the IDs of the extreme agents, there exists a randomized mechanism that is truthful in expectation and achieves $1.67$ consistency and $2$ robustness for any number of agents.
\end{theorem}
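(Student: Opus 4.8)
The plan is to establish the $1.67$ consistency bound using the mechanism in Algorithm~\ref{2D-ID} run on exactly three predicted extreme agents, and to argue robustness separately. The key reduction, already justified in the preceding discussion, is that after an infinitesimal perturbation we may assume exactly $k=3$ agents lie on the minimum enclosing circle; the case $k=2$ is handled by Theorem~\ref{thm:2D-2ExtS}, so it remains only to treat $k=3$. First I would set up the geometry: when the prediction is accurate, the three extreme agents $x_{e_1}, x_{e_2}, x_{e_3}$ form a triangle whose circumcircle is the minimum enclosing circle, so its circumcenter is $o(\mathbf{x})$ and $C(o(\mathbf{x}),\mathbf{x})$ equals the circumradius $R$. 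The mechanism returns the centroid $\G$ with probability $1/2$ and each vertex $x_{e_j}$ with probability $1/6$. I would compute the expected egalitarian cost by noting that from each of these four candidate locations, the maximum distance to the three vertices is what matters, and then bound the resulting weighted sum against $R$.

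The heart of the argument is the Euler line. The centroid $\G$, the circumcenter $O = o(\mathbf{x})$, and the orthocenter are collinear on the Euler line, with the standard relation $d(O,\G) = \tfrac{1}{3} d(O, H)$ and, more usefully, $\overrightarrow{O\G} = \tfrac{1}{3}\overrightarrow{OH}$. I would use this to control $d(O,\G)$ and hence, via the decomposition of cost around the circumcenter, to bound the maximum distance from $\G$ to the farthest vertex. Concretely, for any candidate location $y$ the egalitarian cost is $\max_j d(y, x_{e_j})$, and since all $x_{e_j}$ lie on the circle of radius $R$ centered at $O$, this maximum is at most $R + d(y,O)$ when $y$ is inside the circle and admits a sharper triangle-specific bound. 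The target is to show the $1/2,1/6,1/6,1/6$ mixture yields expected cost at most $\tfrac{5}{3}R = (2 - 1/3)R$, matching the $2-1/n$ formula of~\cite{DBLP:conf/sigecom/TangYZ20} specialized to $n=3$; the Euler line relation is exactly what pins down the worst-case triangle. I expect the main obstacle to be identifying the extremal triangle shape that maximizes the expected cost and verifying that even in that worst case the bound is $5/3$, since this requires a careful optimization over triangle geometry rather than a single clean inequality.

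For truthfulness, I would argue as in Theorem~\ref{thm:2D-2ExtS} and the original Centroid analysis: only the three predicted extreme agents influence the output, and the $1/2,1/6,1/6,1/6$ weighting is precisely the structure~\cite{DBLP:conf/sigecom/TangYZ20} use to remove any incentive for an extreme agent to misreport, since pulling the centroid toward oneself is exactly offset by the probability mass placed on one's own reported point. Non-extreme agents cannot affect the outcome at all, so they have no profitable deviation either. Finally, for robustness I would invoke the observation already stated in the text: every location in the support of Mechanism~\ref{2D-ID} lies within the minimum enclosing circle of the predicted-extreme agents, and any point inside (or on) the true minimum enclosing circle incurs egalitarian cost at most twice the optimum. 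Thus even when the prediction $\predE$ is arbitrarily wrong, the returned facility is within the convex hull of reported locations and the $2$-approximation holds unconditionally, completing the consistency-robustness pair.
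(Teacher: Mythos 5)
Your mechanism, reduction to $k=3$, truthfulness argument, and robustness argument all match the paper's proof, but the consistency bound --- the heart of the theorem --- is left open in your proposal, and the way you propose to close it is off track. You write that the Euler line relation ``pins down the worst-case triangle'' and that the main work is ``a careful optimization over triangle geometry.'' No such optimization is needed, and the step you are missing is a single clean fact: because the three extreme agents determine the minimum enclosing circle, their triangle must contain the center $O$ (otherwise the circle could be shrunk, or would be determined by only two of the points), so the triangle is non-obtuse and its orthocenter $H$ lies inside it, hence inside the circumcircle, giving $d(O,H) \leq R$. Combined with the Euler line relation $d(O,\G) = d(O,H)/3$ this yields $d(O,\G) \leq R/3$, and then the proof closes in two lines: for every agent $j \in [n]$, when the prediction is accurate, $d(x_j,\G) \leq d(x_j,O) + d(O,\G) \leq R + R/3$ and $d(x_{e_i},x_j) \leq 2R$, so the expected cost is at most $\tfrac{1}{2} \cdot \tfrac{4}{3}R + \tfrac{1}{2} \cdot 2R = \tfrac{5}{3}R$. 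Without the bound $d(O,H)\leq R$ your plan has no control on $d(O,\G)$ at all; for an obtuse triangle $d(O,H)$ exceeds $R$ and can be arbitrarily large, so the non-obtuseness inherited from the minimum-enclosing-circle structure is exactly the ingredient that cannot be skipped.

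Two smaller points. First, the egalitarian cost is the maximum over all $n$ agents, not over the three vertices as you state; this is harmless only because, when the prediction is accurate, every agent lies inside the circumcircle of the extreme agents, so the same bounds $R + d(y,O)$ and $2R$ apply --- but this must be said, since the mechanism only sees three agents while the cost ranges over all of them. Second, and relatedly, you cannot obtain the bound by ``matching the $2-1/n$ formula of \citet{DBLP:conf/sigecom/TangYZ20} specialized to $n=3$'': their guarantee is for a mechanism run on all agents with cost measured over those same agents, whereas here the mechanism is run on three agents while the cost is measured over all $n$; this mismatch is precisely why the paper proves consistency via the Euler line rather than by citation.
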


\begin{proof} 
    Let us first consider instances with only three extreme agents. Mechanism~\ref{2D-ID} is truthful since other agents apart from $x_{e_1}, x_{e_2}, x_{e_k}$ cannot influence the result and Mechanism~\ref{2D-ID} is equivalent to the mechanism of \citet{DBLP:conf/sigecom/TangYZ20}  over agents $e_1, \ldots, e_k$ and since this mechanism is truthful in expectation,   agents $e_1, \ldots, e_k$ cannot benefit from misreporting their locations. Since the mechanism returns the reported locations, rather than their predictions, and the centroid is guaranteed to be inside the circumcircle, we can conclude that the mechanism has a robustness factor of $2$. The technical aspect of this proof involves establishing the consistency guarantee by considering the Euler line.

    \paragraph{Euler line.} Given three arbitrary points $x_1, \ldots, x_3$, their circumcircle is the smallest circle that encloses  the three points, their centroid is $(x_1 + x_2 + x_3)/3$, and their orthocenter is the point where the three altitudes (the perpendicular line segments from a vertex to the line that contains the opposite side)  intersect. In any triangle, the center of the circumcircle ($O$), the centroid ($\G$), and the orthocenter ($H$) are collinear, forming the Euler line.  One property of the Euler line is that $\G$ is positioned midway between $O$ and $H$. Additionally, $d(O, \G) = d(\G, H)/2$, implying $d(O, \G) = d(O, H)/3$.

    \begin{figure}
    \centering
    \vspace{0.6cm}
    \includegraphics[width=0.9\linewidth]{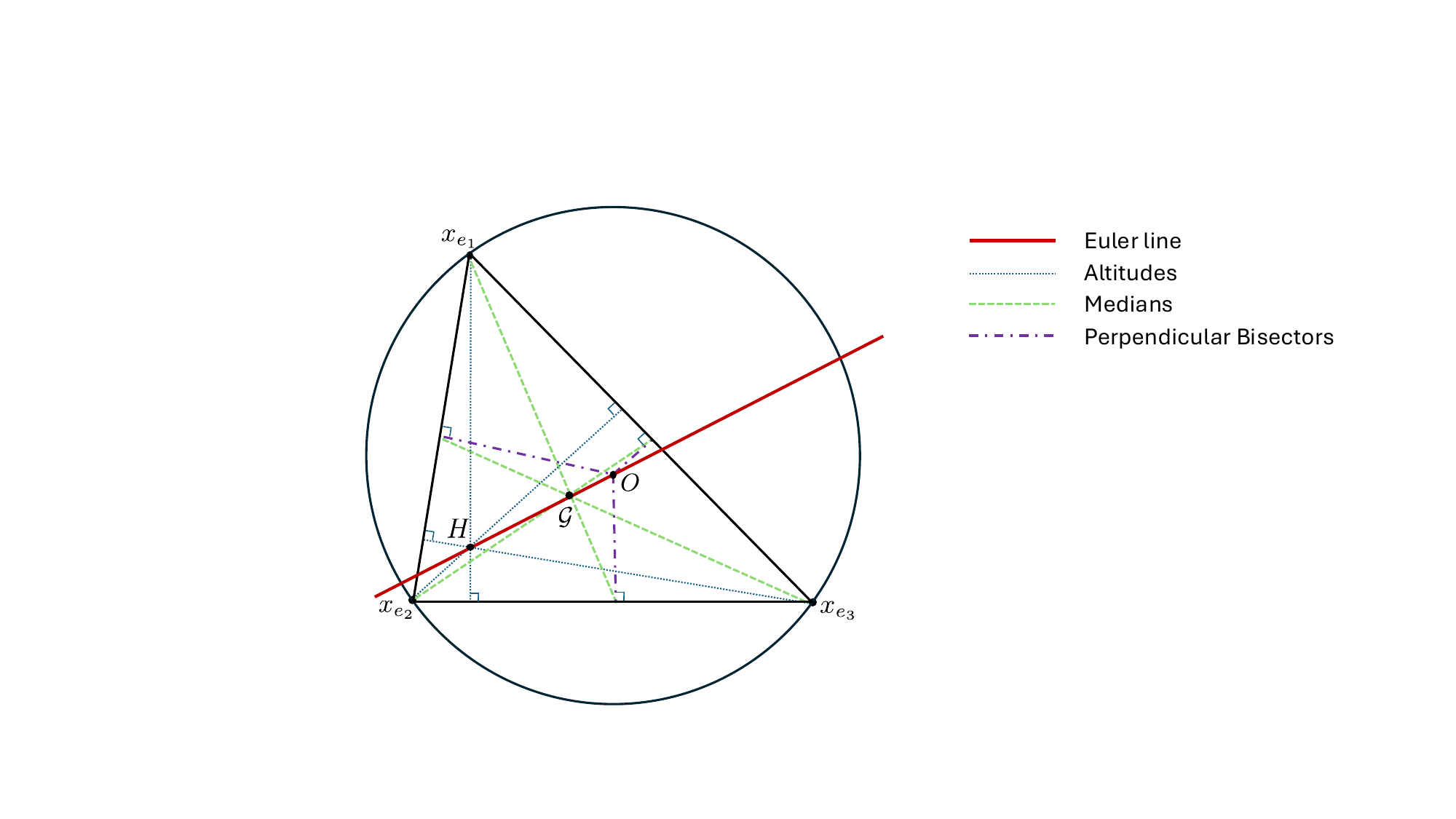}
    \caption{$\G = (x_{e_1} + x_{e_2} + x_{e_3})/3$ is the centroid of $x_{e_1}, \ldots, x_{e_3}$, which is the intersection of the three medians. $H$ is the orthocenter, which is the intersection of the three altitudes, and $O$ is the center of the circumcircle, which is the intersection of the three perpendicular bisectors.
    In any triangle, the circumcenter ($O$), the centroid ($\G$), and the orthocenter ($H$) are collinear, forming the Euler line. Moreover,  $d(O, \G) = \frac{1}{2} d(\G, H)$.}
    \label{fig:Euler-line}
    \end{figure}

    Let $R$ denote the radius of the circumcircle of $x_{e_1}, x_{e_2}, x_{e_3}$ and assume without loss of generality that the optimum cost is $R = 1$. For any $j \in [n]$, we have 
    $$d(x_j, \G) \leq d(x_j, O) + d(O, \G) \leq R + d(O, \G)$$ where the first inequality is by triangle inequality and the second is because all the points are within the circumcircle of the locations of the extreme agents when the predictions are correct.
    Since $H$ lies within the circumcircle, we infer $d(O, H) \leq R$, conclusively showing that $d(O, \G) \leq R/3$.
    Since $R$ is the optimum cost and any agent can reach $O$ by paying that cost, the cost of returning $\G$ for any agent is upper bounded by $R + d(O, \G)$, which is at most $1.34 R$. Consequently, the approximation of Mechanism \ref{2D-ID} in case of having accurate predictions is less or equal than

    \begin{align*}
    &\frac{1}{2k} \frac{1}{R}\sum_{i=1}^k \max_{j \in [n]} d(x_{e_i}, x_j) + \frac{1}{2} \frac{1}{R}\max_{j \in [n]} d(x_j, \G) \leq 
    1  + \frac{1}{2} \frac{R+ d(O, \G)}{R} \leq 1 + \frac{1}{2} \left(1 + \frac{1}{3}\right) \approx 1.67.
    \end{align*}
    where the first inequality is since $d(x_{e_i}, x_j) \leq d(x_{e_i}, O) + d(O, x_j) \leq 2R$ when the predictions are correct.
    Next, we show how we can modify the mechanism to achieve the same approximation guarantees as having three extreme agents while maintaining truthfulness.

    As mentioned previously, the key concept involves perturbing the instance before requesting predictions to have at most three extreme agents. Given any instance $\mathbf{x}= \langle x_1, \cdots, x_n \rangle$, define a perturbed instance $\mathbf{\tilde{x}}= \langle \tilde{x}_1, \cdots, \tilde{x}_n \rangle$, which is the result of a continuous perturbation of $\mathbf{x}$. Consider the set of predictions for the IDs of extreme agents in $\mathbf{\tilde{x}}$.
    Although the extreme agents of the perturbed instance $\mathbf{\tilde{x}}$ might differ from those of the original instance $\mathbf{x}$, their costs remain very close to the maximum cost in case of having good predictions. Therefore, the circumcircle of them is a good representation of the minimum enclosing circle and results in the same approximation guarantees.

    Since we run the mechanism on the main instance $\mathbf{x}$, truthfulness holds as before. Note that since we ask for the ID of extreme agents of the perturbed instance $\mathbf{\tilde{x}}$, we have consistency guarantees if predictions are accurate based on the perturbed instance $\mathbf{\tilde{x}}$, and in any case we can ensure the robustness factor of $2$.
\end{proof}

\section{Future Directions}

The problem of designing randomized facility location mechanisms in the Euclidean space is very natural and several open questions remain, even without predictions. While numerous studies have addressed this problem in restricted spaces, there remains a gap regarding between the best possible approximation guarantee in $(1.118, 2-1/n)$. The possibility of a truthful in expectation mechanism achieving better than a $2-1/n$ approximation is intriguing. Obtaining a stronger lower bound than $1.118$ would also be very interesting.

Augmenting these mechanisms with predictions introduces a new dimension to the problem. While we have explored various prediction settings and provided both positive and negative results, many of the current findings in Euclidean space lack tightness. Finding tight results for different types of predictions would enable meaningful comparisons and help identify which prediction strategy is most effective in different scenarios.

\pagebreak
\bibliographystyle{plainnat}
\bibliography{References}
\pagebreak

\appendix
\section{Missing Proofs of Section~\ref{Plane}}
\label{apndx:B}

\LBdeteministicFullPred*
\begin{proof}
    Using the characterization provided by \cite{peters1993range}, we know that any deterministic, strategyproof, anonymous, and unanimous mechanism can be expressed as a Generalized Coordinatewise Median (GCM) mechanism with $n-1$ constant points in $P$. The GCM mechanism takes as input the reported locations $\mathbf{x}$ of the $n$ agents and a multiset $P$ of fixed points (known as phantom points), which are constant and independent of the agents' reported locations. It then outputs the coordinatewise median of the combined set $\mathbf{x} \cup P$.

    Consider a five-agent instance with $\mathbf{x}$ containing four agents at $(0,20)$ and one agent at $(0,10)$. Our proof first shows that to achieve a consistency better than $2$, this mechanism needs to introduce four constant points all with $y$-coordinates greater than $10$. Otherwise, even if the predictions are correct, the $y$-coordinate of $\text{GCM}(\mathbf{x}, P)$ will be at most $10$, leading to consistency greater than or equal to $2$, since the optimal point would be $(0,15)$.

    Then, we show that if all these four constant points have $y$-coordinates greater than $10$, the robustness of the GCM mechanism is at least $1 + \sqrt{2}$. Let $\tilde{x}$ be the median of the $x$-coordinates of the four constant points, and now consider the following instance, where the optimal cost is one.

    Consider a five-agent instance with $\mathbf{x}$ containing three agents at $(\tilde{x},0)$, one agent at $(\tilde{x} - 1,1)$, and one agent at $(\tilde{x} - 1 - \frac{1}{\sqrt{2}}, -\frac{1}{\sqrt{2}})$.
    Since $\tilde{x}$ is the median of the $x$-coordinates of both true agent locations $\mathbf{x}$ and constant points $P$, $\tilde{x}$ will be the $x$-coordinate of the $\text{GCM}(\mathbf{x}, P)$. Since all the constant points have $y$-coordinates greater than $10$, the $y$-coordinate of the $\text{GCM}(\mathbf{x}, P)$ will be $1$.

    Since $(\tilde{x}, 1)$ has a distance of $1 + \sqrt{2}$ from $(\tilde{x} - 1 - \frac{1}{\sqrt{2}}, -\frac{1}{\sqrt{2}})$ while $(\tilde{x} - 1, 0)$, the optimal point, has a distance of $1$ from all the points, we can conclude that the robustness factor is at least $1 + \sqrt{2}$.
\end{proof}

\LBRanFullPred*
\begin{proof}
    Consider the instance $\mathbf{x} = \langle x_L=0, x_R=2 \rangle$, and assume we have predictions $\mathbf{\hat{x}} = \langle \hat{x}_L=0, \hat{x}_R=2 \rangle$. To achieve $1$-consistency, a mechanism should return $M=1$ with probability $1$.

    Now consider another instance $\mathbf{x'} = \langle x'_L=0, x'_R=1 \rangle$ with incorrect predictions $\mathbf{\hat{x}} = \langle \hat{x}_L=0, \hat{x}_R=2 \rangle$. To ensure that the rightmost agent will not misreport to $x_R=2$ and have a cost of zero, the mechanism needs to ensure that the expected cost of the agent if they report the truth is at most zero, which means returning $x'_R$ with probability $1$. This results in a robustness factor of $2$.

    \cite{DBLP:conf/sigecom/AgrawalBGOT22} proposed a deterministic Minimum Bounding Box mechanism that is $1$-consistent and $1 + \sqrt{2}$-robust. We argue that this mechanism is $1$-consistent and $2$-robust for two-agent instances.

\begin{mechanism}[Minimum Bounding Box]\label{Minimum Bounding Box}
    Given $n$ points $\mathbf{x} = \langle (x_1, y_1), \cdots, (x_n, y_n)\rangle$, and a prediction $F^* = (x_F, y_F)$, return $(MinMaxP( \langle x_1, \cdots, x_n\rangle,x_F), MinMaxP( \langle y_1, \cdots, y_n \rangle, y_F))$.
\end{mechanism}

\begin{observation}
    Mechanism \ref{Minimum Bounding Box} is $1$-consistent and $2$-robust for instances with only two agents.
\end{observation}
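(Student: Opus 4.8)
The plan is to verify the two claims in the observation separately, working with the Minimum Bounding Box mechanism specialized to two agents in the plane. Recall that Mechanism~\ref{Minimum Bounding Box} applies the one-dimensional $\texttt{MinMaxP}$ rule independently in each coordinate, using the corresponding coordinate of the predicted optimal facility location $F^* = (x_F, y_F)$. For two agents $x_1 = (x_1, y_1)$ and $x_2 = (x_2, y_2)$, the optimal facility location $o(\mathbf{x})$ is the midpoint of the segment joining them, and the optimal egalitarian cost is exactly half the distance between the two agents, i.e., $C(o(\mathbf{x}), \mathbf{x}) = d(x_1, x_2)/2$.

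For \emph{$1$-consistency}, I would observe that when the prediction is accurate, $F^* = o(\mathbf{x})$, so $x_F$ is the midpoint of $\{x_1, x_2\}$ in the first coordinate and $y_F$ is the midpoint in the second coordinate. In each coordinate, the midpoint lies in the interval spanned by the two agents' coordinates, so by the definition of $\texttt{MinMaxP}$ (return the prediction if it falls inside the reported interval) the mechanism returns exactly $x_F$ and $y_F$ in the two coordinates. Hence the facility is placed precisely at $o(\mathbf{x})$, giving cost equal to the optimum and therefore a $1$-approximation.

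For \emph{$2$-robustness}, I would argue coordinatewise that the returned location always lies inside the minimum bounding box of the two agents, regardless of the prediction. Indeed, $\texttt{MinMaxP}$ clamps the predicted coordinate into the reported interval $[\min, \max]$ in each dimension, so both coordinates of $f(\mathbf{x})$ lie between those of $x_1$ and $x_2$. For two points, this bounding box is the axis-aligned rectangle with the segment $x_1 x_2$ as a diagonal, and any point $y$ inside this rectangle satisfies $\max\{d(x_1, y), d(x_2, y)\} \le d(x_1, x_2)$ (the farthest a point in the rectangle can be from either corner is the full diagonal). Since the optimal cost is $d(x_1, x_2)/2$, this yields an egalitarian cost at most $d(x_1, x_2) = 2 \cdot C(o(\mathbf{x}), \mathbf{x})$, establishing the $2$-approximation.

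The main subtlety, and the step I would treat most carefully, is the geometric bound $\max\{d(x_1, y), d(x_2, y)\} \le d(x_1, x_2)$ for $y$ in the rectangle having $x_1 x_2$ as a diagonal. This is where the two-agent structure matters: with only two points the bounding box is exactly this rectangle, and the worst case is attained when $y$ sits at the opposite corner, where its distance to the nearer agent is the full diagonal length. For more than two agents the bounding box no longer has this clean diagonal structure, which is precisely why the stronger $1+\sqrt{2}$ robustness bound of \cite{DBLP:conf/sigecom/AgrawalBGOT22} is needed in general but improves to $2$ here. I would make this inequality rigorous by placing $x_1$ and $x_2$ at opposite corners of the rectangle and bounding the distance from any interior point to each corner by the diagonal, which is a short computation using that each coordinate difference is at most the corresponding side length.
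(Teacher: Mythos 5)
Your proof is correct and follows essentially the same approach as the paper: the robustness argument rests on the observation that the two agents sit at opposite corners of the minimum bounding box, so any clamped output is within the diagonal length $d(x_1,x_2) = 2\cdot C(o(\mathbf{x}),\mathbf{x})$ of both agents. You additionally spell out the $1$-consistency step (the midpoint lies in each coordinate interval, so \texttt{MinMaxP} returns the accurate prediction exactly), which the paper leaves implicit; this is a welcome, if minor, addition.
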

\begin{proof}
    One of the two diagonals of the minimum bounding box has two agents on its vertices. Therefore, any point inside or on this box has the approximation factor of $2$.
\end{proof}

Therefore, the upper and lower bounds match for instances with two agents, but it remains unclear where the exact value is even for three agents.

\end{proof}

\end{document}